\DeclareMathOperator{\D}{d} \DeclareMathOperator{\E}{E}
\DeclareMathOperator{\Tr}{Tr}
\newcommand*{\QEDA}
{\hfill\ensuremath{\blacksquare}}
\newtheorem{theorem}{Theorem}
\newtheorem{axiom}[theorem]{Axiom}
\newtheorem{conjecture}[theorem]{Conjecture}
\newtheorem{corollary}{Corollary}
\newtheorem{definition}[theorem]{Definition}
\newtheorem{exercise}[theorem]{Exercise}
\newtheorem{lemma}{Lemma}
\newtheorem{proposition}[theorem]{Proposition}
\newtheorem{remark}{Remark}
\let\pdfoutput=\undefined\fi
\chardef\@x10\chardef\@xv60
\def\tcitime{
\def\@time{%
  \@minute\time\@hour\@minute\divide\@hour\@xv
  \ifnum\@hour<\@x 0\fi\the\@hour:%
  \multiply\@hour\@xv\advance\@minute-\@hour
  \ifnum\@minute<\@x 0\fi\the\@minute
  }}%
\def\x@hyperref#1#2#3{%
   \catcode`\~ = 12
   \catcode`\$ = 12
   \catcode`\_ = 12
   \catcode`\# = 12
   \catcode`\& = 12
   \catcode`\% = 12
   \y@hyperref{#1}{#2}{#3}%
}
\def\y@hyperref#1#2#3#4{%
   #2\ref{#4}#3
   \catcode`\~ = 13
   \catcode`\$ = 3
   \catcode`\_ = 8
   \catcode`\# = 6
   \catcode`\& = 4
   \catcode`\% = 14
}
\def\QCTOpt[#1]#2{%
  \def\QCTOptB{#1}
  \def\QCTOptA{#2}
}
\def\QCTNOpt#1{%
  \def\QCTOptA{#1}
  \let\QCTOptB\empty
}
\def\Qct{%
  \@ifnextchar[{%
    \QCTOpt}{\QCTNOpt}
}
\def\QCBOpt[#1]#2{%
  \def\QCBOptB{#1}%
  \def\QCBOptA{#2}%
}
\def\QCBNOpt#1{%
  \def\QCBOptA{#1}%
  \let\QCBOptB\empty
}
\def\Qcb{%
  \@ifnextchar[{%
    \QCBOpt}{\QCBNOpt}%
}
\def\PrepCapArgs{%
  \ifx\QCBOptA\empty
    \ifx\QCTOptA\empty
      {}%
    \else
      \ifx\QCTOptB\empty
        {\QCTOptA}%
      \else
        [\QCTOptB]{\QCTOptA}%
      \fi
    \fi
  \else
    \ifx\QCBOptA\empty
      {}%
    \else
      \ifx\QCBOptB\empty
        {\QCBOptA}%
      \else
        [\QCBOptB]{\QCBOptA}%
      \fi
    \fi
  \fi
}
\def\GRAPHICSPS#1{%
 \ifcase\GRAPHICSTYPE
   \special{ps: #1}%
 \or
   \special{language "PS", include "#1"}%
 \fi
}%
\def\graffile#1#2#3#4{%
    \bgroup
       \@inlabelfalse
       \leavevmode
       \@ifundefined{bbl@deactivate}{\def~{\string~}}{\activesoff}%
        \raise -#4 \BOXTHEFRAME{%
           \hbox to #2{\raise #3\hbox to #2{\null #1\hfil}}}%
    \egroup
}%
\def\draftbox#1#2#3#4{%
 \leavevmode\raise -#4 \hbox{%
  \frame{\rlap{\protect\tiny #1}\hbox to #2%
   {\vrule height#3 width\z@ depth\z@\hfil}%
  }%
 }%
}%
\let\nographics=\@msidraft
\newif\ifwasdraft
\def\GRAPHIC#1#2#3#4#5{%
   \ifnum\@msidraft=\@ne\draftbox{#2}{#3}{#4}{#5}%
   \else\graffile{#1}{#3}{#4}{#5}%
   \fi
}
\def\addtoLaTeXparams#1{%
    \edef\LaTeXparams{\LaTeXparams #1}}%
\newif\ifBoxFrame \BoxFramefalse
\newif\ifOverFrame \OverFramefalse
\newif\ifUnderFrame \UnderFramefalse
\def\BOXTHEFRAME#1{%
   \hbox{%
      \ifBoxFrame
         \frame{#1}%
      \else
         {#1}%
      \fi
   }%
}
\def\doFRAMEparams#1{\BoxFramefalse\OverFramefalse\UnderFramefalse\readFRAMEparams#1\end}%
\def\readFRAMEparams#1{%
 \ifx#1\end%
  \let\next=\relax
  \else
  \ifx#1i\dispkind=\z@\fi
  \ifx#1d\dispkind=\@ne\fi
  \ifx#1f\dispkind=\tw@\fi
  \ifx#1t\addtoLaTeXparams{t}\fi
  \ifx#1b\addtoLaTeXparams{b}\fi
  \ifx#1p\addtoLaTeXparams{p}\fi
  \ifx#1h\addtoLaTeXparams{h}\fi
  \ifx#1X\BoxFrametrue\fi
  \ifx#1O\OverFrametrue\fi
  \ifx#1U\UnderFrametrue\fi
  \ifx#1w
    \ifnum\@msidraft=1\wasdrafttrue\else\wasdraftfalse\fi
    \@msidraft=\@ne
  \fi
  \let\next=\readFRAMEparams
  \fi
 \next
 }%
\def\IFRAME#1#2#3#4#5#6{%
      \bgroup
      \let\QCTOptA\empty
      \let\QCTOptB\empty
      \let\QCBOptA\empty
      \let\QCBOptB\empty
      #6%
      \parindent=0pt
      \leftskip=0pt
      \rightskip=0pt
      \setbox0=\hbox{\QCBOptA}%
      \@tempdima=#1\relax
      \ifOverFrame
          \typeout{This is not implemented yet}%
          \show\HELP
      \else
         \ifdim\wd0>\@tempdima
            \advance\@tempdima by \@tempdima
            \ifdim\wd0 >\@tempdima
               \setbox1 =\vbox{%
                  \unskip\hbox to \@tempdima{\hfill\GRAPHIC{#5}{#4}{#1}{#2}{#3}\hfill}%
                  \unskip\hbox to \@tempdima{\parbox[b]{\@tempdima}{\QCBOptA}}%
               }%
               \wd1=\@tempdima
            \else
               \textwidth=\wd0
               \setbox1 =\vbox{%
                 \noindent\hbox to \wd0{\hfill\GRAPHIC{#5}{#4}{#1}{#2}{#3}\hfill}\\%
                 \noindent\hbox{\QCBOptA}%
               }%
               \wd1=\wd0
            \fi
         \else
            \ifdim\wd0>0pt
              \hsize=\@tempdima
              \setbox1=\vbox{%
                \unskip\GRAPHIC{#5}{#4}{#1}{#2}{0pt}%
                \break
                \unskip\hbox to \@tempdima{\hfill \QCBOptA\hfill}%
              }%
              \wd1=\@tempdima
           \else
              \hsize=\@tempdima
              \setbox1=\vbox{%
                \unskip\GRAPHIC{#5}{#4}{#1}{#2}{0pt}%
              }%
              \wd1=\@tempdima
           \fi
         \fi
         \@tempdimb=\ht1
         \advance\@tempdimb by -#2
         \advance\@tempdimb by #3
         \leavevmode
         \raise -\@tempdimb \hbox{\box1}%
      \fi
      \egroup%
}%
\def\DFRAME#1#2#3#4#5{%
  \vspace\topsep
  \hfil\break
  \bgroup
     \leftskip\@flushglue
     \rightskip\@flushglue
     \parindent\z@
     \parfillskip\z@skip
     \let\QCTOptA\empty
     \let\QCTOptB\empty
     \let\QCBOptA\empty
     \let\QCBOptB\empty
     \vbox\bgroup
        \ifOverFrame
           #5\QCTOptA\par
        \fi
        \GRAPHIC{#4}{#3}{#1}{#2}{\z@}%
        \ifUnderFrame
           \break#5\QCBOptA
        \fi
     \egroup
  \egroup
  \vspace\topsep
  \break
}%
\def\FFRAME#1#2#3#4#5#6#7{%
  \@ifundefined{floatstyle}
    {
     \begin{figure}[#1]%
    }
    {
     \ifx#1h
      \begin{figure}[H]%
     \else
      \begin{figure}[#1]%
     \fi
    }
  \let\QCTOptA\empty
  \let\QCTOptB\empty
  \let\QCBOptA\empty
  \let\QCBOptB\empty
  \ifOverFrame
    #4
    \ifx\QCTOptA\empty
    \else
      \ifx\QCTOptB\empty
        \caption{\QCTOptA}%
      \else
        \caption[\QCTOptB]{\QCTOptA}%
      \fi
    \fi
    \ifUnderFrame\else
      \label{#5}%
    \fi
  \else
    \UnderFrametrue%
  \fi
  \begin{center}\GRAPHIC{#7}{#6}{#2}{#3}{\z@}\end{center}%
  \ifUnderFrame
    #4
    \ifx\QCBOptA\empty
      \caption{}%
    \else
      \ifx\QCBOptB\empty
        \caption{\QCBOptA}%
      \else
        \caption[\QCBOptB]{\QCBOptA}%
      \fi
    \fi
    \label{#5}%
  \fi
  \end{figure}%
 }%
\def\makeactives{
  \catcode`\"=\active
  \catcode`\;=\active
  \catcode`\:=\active
  \catcode`\'=\active
  \catcode`\~=\active
}
   \gdef\activesoff{%
      \def"{\string"}%
      \def;{\string;}%
      \def:{\string:}%
      \def'{\string'}%
      \def~{\string~}%
    }
\def\FRAME#1#2#3#4#5#6#7#8{%
 \bgroup
 \ifnum\@msidraft=\@ne
   \wasdrafttrue
 \else
   \wasdraftfalse%
 \fi
 \def\LaTeXparams{}%
 \dispkind=\z@
 \def\LaTeXparams{}%
 \doFRAMEparams{#1}%
 \ifnum\dispkind=\z@\IFRAME{#2}{#3}{#4}{#7}{#8}{#5}\else
  \ifnum\dispkind=\@ne\DFRAME{#2}{#3}{#7}{#8}{#5}\else
   \ifnum\dispkind=\tw@
    \edef\@tempa{\noexpand\FFRAME{\LaTeXparams}}%
    \@tempa{#2}{#3}{#5}{#6}{#7}{#8}%
    \fi
   \fi
  \fi
  \ifwasdraft\@msidraft=1\else\@msidraft=0\fi{}%
  \egroup
 }%
\def\TEXUX#1{"texux"}
\def\func#1{\mathop{\rm #1}\nolimits}%
\long\def\QQQ#1#2{%
     \long\expandafter\def\csname#1\endcsname{#2}}%
\long\def\QQA#1#2{}%
\def\QTR#1#2{{\csname#1\endcsname {#2}}}%
\def\EXPAND#1[#2]#3{}%
\def\NOEXPAND#1[#2]#3{}%
\def\LaTeXparent#1{}%
\def\ChildStyles#1{}%
\def\ChildDefaults#1{}%
\def\QTagDef#1#2#3{}%
  \providecommand{\UNICODE}[2][]{\protect\rule{.1in}{.1in}}
  \providecommand{\U}[1]{\protect\rule{.1in}{.1in}}
\def\QQfnmark#1{\footnotemark}
 \def\abstract{%
  \if@twocolumn
   \section*{Abstract (Not appropriate in this style!)}%
   \else \small
   \begin{center}{\bf Abstract\vspace{-.5em}\vspace{\z@}}\end{center}%
   \quotation
   \fi
  }%
   \def\registered{\relax\ifmmode{}\r@gistered
                    \else$\m@th\r@gistered$\fi}%
 \def\r@gistered{^{\ooalign
  {\hfil\raise.07ex\hbox{$\scriptstyle\rm\text{R}$}\hfil\crcr
  \mathhexbox20D}}}}{}%
\newdimen\theight
\def\newfmtname{LaTeX2e}
  \DeclareOldFontCommand{\rm}{\normalfont\rmfamily}{\mathrm}
  \DeclareOldFontCommand{\sf}{\normalfont\sffamily}{\mathsf}
  \DeclareOldFontCommand{\tt}{\normalfont\ttfamily}{\mathtt}
  \DeclareOldFontCommand{\bf}{\normalfont\bfseries}{\mathbf}
  \DeclareOldFontCommand{\it}{\normalfont\itshape}{\mathit}
  \DeclareOldFontCommand{\sl}{\normalfont\slshape}{\@nomath\sl}
  \DeclareOldFontCommand{\sc}{\normalfont\scshape}{\@nomath\sc}
\def\alpha{{\Greekmath 010B}}%
\def\beta{{\Greekmath 010C}}%
\def\gamma{{\Greekmath 010D}}%
\def\delta{{\Greekmath 010E}}%
\def\epsilon{{\Greekmath 010F}}%
\def\zeta{{\Greekmath 0110}}%
\def\eta{{\Greekmath 0111}}%
\def\theta{{\Greekmath 0112}}%
\def\iota{{\Greekmath 0113}}%
\def\kappa{{\Greekmath 0114}}%
\def\lambda{{\Greekmath 0115}}%
\def\mu{{\Greekmath 0116}}%
\def\nu{{\Greekmath 0117}}%
\def\xi{{\Greekmath 0118}}%
\def\pi{{\Greekmath 0119}}%
\def\rho{{\Greekmath 011A}}%
\def\sigma{{\Greekmath 011B}}%
\def\tau{{\Greekmath 011C}}%
\def\upsilon{{\Greekmath 011D}}%
\def\phi{{\Greekmath 011E}}%
\def\chi{{\Greekmath 011F}}%
\def\psi{{\Greekmath 0120}}%
\def\omega{{\Greekmath 0121}}%
\def\varepsilon{{\Greekmath 0122}}%
\def\vartheta{{\Greekmath 0123}}%
\def\varpi{{\Greekmath 0124}}%
\def\varrho{{\Greekmath 0125}}%
\def\varsigma{{\Greekmath 0126}}%
\def\varphi{{\Greekmath 0127}}%
\def\nabla{{\Greekmath 0272}}
\def\FindBoldGroup{%
   {\setbox0=\hbox{$\mathbf{x\global\edef\theboldgroup{\the\mathgroup}}$}}%
}
\def\Greekmath#1#2#3#4{%
    \if@compatibility
        \ifnum\mathgroup=\symbold
           \mathchoice{\mbox{\boldmath$\displaystyle\mathchar"#1#2#3#4$}}%
                      {\mbox{\boldmath$\textstyle\mathchar"#1#2#3#4$}}%
                      {\mbox{\boldmath$\scriptstyle\mathchar"#1#2#3#4$}}%
                      {\mbox{\boldmath$\scriptscriptstyle\mathchar"#1#2#3#4$}}%
        \else
           \mathchar"#1#2#3#4%
        \fi
    \else
        \FindBoldGroup
        \ifnum\mathgroup=\theboldgroup 
           \mathchoice{\mbox{\boldmath$\displaystyle\mathchar"#1#2#3#4$}}%
                      {\mbox{\boldmath$\textstyle\mathchar"#1#2#3#4$}}%
                      {\mbox{\boldmath$\scriptstyle\mathchar"#1#2#3#4$}}%
                      {\mbox{\boldmath$\scriptscriptstyle\mathchar"#1#2#3#4$}}%
        \else
           \mathchar"#1#2#3#4%
        \fi
      \fi}
\newif\ifGreekBold  \GreekBoldfalse
\let\SAVEPBF=\pbf
\def\pbf{\GreekBoldtrue\SAVEPBF}%
  \newcounter{equationnumber}
  \def\mathletters{%
     \addtocounter{equation}{1}
     \edef\@currentlabel{\theequation}%
     \setcounter{equationnumber}{\c@equation}
     \setcounter{equation}{0}%
     \edef\theequation{\@currentlabel\noexpand\alph{equation}}%
  }
    \def\BibTeX{{\rm B\kern-.05em{\sc i\kern-.025em b}\kern-.08em
                 T\kern-.1667em\lower.7ex\hbox{E}\kern-.125emX}}}{}%
\def\AmS{{\protect\usefont{OMS}{cmsy}{m}{n}%
                A\kern-.1667em\lower.5ex\hbox{M}\kern-.125emS}}}{}%
\def\@@eqncr{\let\@tempa\relax
    \ifcase\@eqcnt \def\@tempa{& & &}\or \def\@tempa{& &}%
      \else \def\@tempa{&}\fi
     \@tempa
     \if@eqnsw
        \iftag@
           \@taggnum
        \else
           \@eqnnum\stepcounter{equation}%
        \fi
     \fi
     \global\tag@false
     \global\@eqnswtrue
     \global\@eqcnt\z@\cr}
\def\TCItag{\@ifnextchar*{\@TCItagstar}{\@TCItag}}
\def\@TCItag#1{%
    \global\tag@true
    \global\def\@taggnum{(#1)}%
    \global\def\@currentlabel{#1}}
\def\@TCItagstar*#1{%
    \global\tag@true
    \global\def\@taggnum{#1}%
    \global\def\@currentlabel{#1}}
\def\QATOPD#1#2#3#4{{#3 \atopwithdelims#1#2 #4}}%
\def\tint{\msi@int\textstyle\int}%
\def\tiint{\msi@int\textstyle\iint}%
\def\tiiint{\msi@int\textstyle\iiint}%
\def\tiiiint{\msi@int\textstyle\iiiint}%
\def\tidotsint{\msi@int\textstyle\idotsint}%
\def\toint{\msi@int\textstyle\oint}%
\newtoks\temptoksa
\newtoks\temptoksb
\newtoks\temptoksc
\def\msi@int#1#2{%
 \def\@temp{{#1#2\the\temptoksc_{\the\temptoksa}^{\the\temptoksb}}}%
 \futurelet\@nextcs
 \@int
}
\def\@int{%
   \ifx\@nextcs\limits
      \typeout{Found limits}%
      \temptoksc={\limits}%
      \let\@next\@intgobble%
   \else\ifx\@nextcs\nolimits
      \typeout{Found nolimits}%
      \temptoksc={\nolimits}%
      \let\@next\@intgobble%
   \else
      \typeout{Did not find limits or no limits}%
      \temptoksc={}%
      \let\@next\msi@limits%
   \fi\fi
   \@next
}%
\def\@intgobble#1{%
   \typeout{arg is #1}%
   \msi@limits
}
\def\msi@limits{%
   \temptoksa={}%
   \temptoksb={}%
   \@ifnextchar_{\@limitsa}{\@limitsb}%
}
\def\@limitsa_#1{%
   \temptoksa={#1}%
   \@ifnextchar^{\@limitsc}{\@temp}%
}
\def\@limitsb{%
   \@ifnextchar^{\@limitsc}{\@temp}%
}
\def\@limitsc^#1{%
   \temptoksb={#1}%
   \@ifnextchar_{\@limitsd}{\@temp}%
}
\def\@limitsd_#1{%
   \temptoksa={#1}%
   \@temp
}
\def\dint{\msi@int\displaystyle\int}%
\def\diint{\msi@int\displaystyle\iint}%
\def\diiint{\msi@int\displaystyle\iiint}%
\def\diiiint{\msi@int\displaystyle\iiiint}%
\def\didotsint{\msi@int\displaystyle\idotsint}%
\def\doint{\msi@int\displaystyle\oint}%
\def\ExitTCILatex{\makeatother }
\if@compatibility\message{amsmath already loaded}\fi\aftergroup\ExitTCILatex}
\if@compatibility\message{amstex already loaded}\fi\aftergroup\ExitTCILatex}
\if@compatibility\message{amsgen already loaded}\fi\aftergroup\ExitTCILatex}
\let\DOTSI\relax
\def\RIfM@{\relax\ifmmode}%
\def\FN@{\futurelet\next}%
\def\iint{\DOTSI\intno@\tw@\FN@\ints@}%
\def\iiint{\DOTSI\intno@\thr@@\FN@\ints@}%
\def\iiiint{\DOTSI\intno@4 \FN@\ints@}%
\def\idotsint{\DOTSI\intno@\z@\FN@\ints@}%
\def\ints@{\findlimits@\ints@@}%
\newif\iflimtoken@
\newif\iflimits@
\def\findlimits@{\limtoken@true\ifx\next\limits\limits@true
 \else\ifx\next\nolimits\limits@false\else
 \limtoken@false\ifx\ilimits@\nolimits\limits@false\else
 \ifinner\limits@false\else\limits@true\fi\fi\fi\fi}%
\def\multint@{\int\ifnum\intno@=\z@\intdots@                          
 \else\intkern@\fi                                                    
 \ifnum\intno@>\tw@\int\intkern@\fi                                   
 \ifnum\intno@>\thr@@\int\intkern@\fi                                 
 \int}
\def\multintlimits@{\intop\ifnum\intno@=\z@\intdots@\else\intkern@\fi
 \ifnum\intno@>\tw@\intop\intkern@\fi
 \ifnum\intno@>\thr@@\intop\intkern@\fi\intop}%
\def\intic@{%
    \mathchoice{\hskip.5em}{\hskip.4em}{\hskip.4em}{\hskip.4em}}%
\def\negintic@{\mathchoice
 {\hskip-.5em}{\hskip-.4em}{\hskip-.4em}{\hskip-.4em}}%
\def\ints@@{\iflimtoken@                                              
 \def\ints@@@{\iflimits@\negintic@
   \mathop{\intic@\multintlimits@}\limits                             
  \else\multint@\nolimits\fi                                          
  \eat@}
 \else                                                                
 \def\ints@@@{\iflimits@\negintic@
  \mathop{\intic@\multintlimits@}\limits\else
  \multint@\nolimits\fi}\fi\ints@@@}%
\def\intkern@{\mathchoice{\!\!\!}{\!\!}{\!\!}{\!\!}}%
\def\plaincdots@{\mathinner{\cdotp\cdotp\cdotp}}%
\def\intdots@{\mathchoice{\plaincdots@}%
 {{\cdotp}\mkern1.5mu{\cdotp}\mkern1.5mu{\cdotp}}%
 {{\cdotp}\mkern1mu{\cdotp}\mkern1mu{\cdotp}}%
 {{\cdotp}\mkern1mu{\cdotp}\mkern1mu{\cdotp}}}%
\def\RIfM@{\relax\protect\ifmmode}
\def\text{\RIfM@\expandafter\text@\else\expandafter\mbox\fi}
\let\nfss@text\text
\def\text@#1{\mathchoice
   {\textdef@\displaystyle\f@size{#1}}%
   {\textdef@\textstyle\tf@size{\firstchoice@false #1}}%
   {\textdef@\textstyle\sf@size{\firstchoice@false #1}}%
   {\textdef@\textstyle \ssf@size{\firstchoice@false #1}}%
   \glb@settings}
\def\textdef@#1#2#3{\hbox{{%
                    \everymath{#1}%
                    \let\f@size#2\selectfont
                    #3}}}
\newif\iffirstchoice@
\def\Let@{\relax\iffalse{\fi\let\\=\cr\iffalse}\fi}%
\def\vspace@{\def\vspace##1{\crcr\noalign{\vskip##1\relax}}}%
\def\multilimits@{\bgroup\vspace@\Let@
 \baselineskip\fontdimen10 \scriptfont\tw@
 \advance\baselineskip\fontdimen12 \scriptfont\tw@
 \lineskip\thr@@\fontdimen8 \scriptfont\thr@@
 \lineskiplimit\lineskip
 \vbox\bgroup\ialign\bgroup\hfil$\m@th\scriptstyle{##}$\hfil\crcr}%
\def\Sb{_\multilimits@}%
\def\endSb{\crcr\egroup\egroup\egroup}%
\def\Sp{^\multilimits@}%
\newdimen\ex@
\def\rightarrowfill@#1{$#1\m@th\mathord-\mkern-6mu\cleaders
 \hbox{$#1\mkern-2mu\mathord-\mkern-2mu$}\hfill
 \mkern-6mu\mathord\rightarrow$}%
\def\leftarrowfill@#1{$#1\m@th\mathord\leftarrow\mkern-6mu\cleaders
 \hbox{$#1\mkern-2mu\mathord-\mkern-2mu$}\hfill\mkern-6mu\mathord-$}%
\def\leftrightarrowfill@#1{$#1\m@th\mathord\leftarrow
\mkern-6mu\cleaders
 \hbox{$#1\mkern-2mu\mathord-\mkern-2mu$}\hfill
 \mkern-6mu\mathord\rightarrow$}%
\def\overrightarrow{\mathpalette\overrightarrow@}%
\def\overrightarrow@#1#2{\vbox{\ialign{##\crcr\rightarrowfill@#1\crcr
 \noalign{\kern-\ex@\nointerlineskip}$\m@th\hfil#1#2\hfil$\crcr}}}%
\def\overleftarrow{\mathpalette\overleftarrow@}%
\def\overleftarrow@#1#2{\vbox{\ialign{##\crcr\leftarrowfill@#1\crcr
 \noalign{\kern-\ex@\nointerlineskip}$\m@th\hfil#1#2\hfil$\crcr}}}%
\def\overleftrightarrow{\mathpalette\overleftrightarrow@}%
\def\overleftrightarrow@#1#2{\vbox{\ialign{##\crcr
   \leftrightarrowfill@#1\crcr
 \noalign{\kern-\ex@\nointerlineskip}$\m@th\hfil#1#2\hfil$\crcr}}}%
\def\underrightarrow{\mathpalette\underrightarrow@}%
\def\underrightarrow@#1#2{\vtop{\ialign{##\crcr$\m@th\hfil#1#2\hfil
  $\crcr\noalign{\nointerlineskip}\rightarrowfill@#1\crcr}}}%
\def\underleftarrow{\mathpalette\underleftarrow@}%
\def\underleftarrow@#1#2{\vtop{\ialign{##\crcr$\m@th\hfil#1#2\hfil
  $\crcr\noalign{\nointerlineskip}\leftarrowfill@#1\crcr}}}%
\def\underleftrightarrow{\mathpalette\underleftrightarrow@}%
\def\underleftrightarrow@#1#2{\vtop{\ialign{##\crcr$\m@th
  \hfil#1#2\hfil$\crcr
 \noalign{\nointerlineskip}\leftrightarrowfill@#1\crcr}}}%
\def\qopnamewl@#1{\mathop{\operator@font#1}\nlimits@}
\let\nlimits@\displaylimits
\def\setboxz@h{\setbox\z@\hbox}
\def\varlim@#1#2{\mathop{\vtop{\ialign{##\crcr
 \hfil$#1\m@th\operator@font lim$\hfil\crcr
 \noalign{\nointerlineskip}#2#1\crcr
 \noalign{\nointerlineskip\kern-\ex@}\crcr}}}}
 \def\rightarrowfill@#1{\m@th\setboxz@h{$#1-$}\ht\z@\z@
  $#1\copy\z@\mkern-6mu\cleaders
  \hbox{$#1\mkern-2mu\box\z@\mkern-2mu$}\hfill
  \mkern-6mu\mathord\rightarrow$}
\def\leftarrowfill@#1{\m@th\setboxz@h{$#1-$}\ht\z@\z@
  $#1\mathord\leftarrow\mkern-6mu\cleaders
  \hbox{$#1\mkern-2mu\copy\z@\mkern-2mu$}\hfill
  \mkern-6mu\box\z@$}
\def\projlim{\qopnamewl@{proj\,lim}}
\def\injlim{\qopnamewl@{inj\,lim}}
\def\varinjlim{\mathpalette\varlim@\rightarrowfill@}
\def\varprojlim{\mathpalette\varlim@\leftarrowfill@}
\def\varliminf{\mathpalette\varliminf@{}}
\def\varliminf@#1{\mathop{\underline{\vrule\@depth.2\ex@\@width\z@
   \hbox{$#1\m@th\operator@font lim$}}}}
\def\varlimsup{\mathpalette\varlimsup@{}}
\def\varlimsup@#1{\mathop{\overline
  {\hbox{$#1\m@th\operator@font lim$}}}}
\def\align{\@verbatim \frenchspacing\@vobeyspaces \@alignverbatim
You are using the "align" environment in a style in which it is not defined.}
\let\csname endalign*\endcsname =\endtrivlist
\def\alignat{\@verbatim \frenchspacing\@vobeyspaces \@alignatverbatim
You are using the "alignat" environment in a style in which it is not defined.}
\let\csname endalignat*\endcsname =\endtrivlist
\def\xalignat{\@verbatim \frenchspacing\@vobeyspaces \@xalignatverbatim
You are using the "xalignat" environment in a style in which it is not defined.}
\let\csname endxalignat*\endcsname =\endtrivlist
\def\gather{\@verbatim \frenchspacing\@vobeyspaces \@gatherverbatim
You are using the "gather" environment in a style in which it is not defined.}
\let\csname endgather*\endcsname =\endtrivlist
\def\multiline{\@verbatim \frenchspacing\@vobeyspaces \@multilineverbatim
You are using the "multiline" environment in a style in which it is not defined.}
\let\csname endmultiline*\endcsname =\endtrivlist
\def\arrax{\@verbatim \frenchspacing\@vobeyspaces \@arraxverbatim
You are using a type of "array" construct that is only allowed in AmS-LaTeX.}
\def\tabulax{\@verbatim \frenchspacing\@vobeyspaces \@tabulaxverbatim
You are using a type of "tabular" construct that is only allowed in AmS-LaTeX.}
\let\csname endarrax*\endcsname =\endtrivlist
\let\csname endtabulax*\endcsname =\endtrivlist
 \def\endequation{%
     \ifmmode\ifinner 
      \iftag@
        \addtocounter{equation}{-1} 
        $\hfil
           \displaywidth\linewidth\@taggnum\egroup \endtrivlist
        \global\tag@false
        \global\@ignoretrue
      \else
        $\hfil
           \displaywidth\linewidth\@eqnnum\egroup \endtrivlist
        \global\tag@false
        \global\@ignoretrue
      \fi
     \else
      \iftag@
        \addtocounter{equation}{-1} 
        \eqno \hbox{\@taggnum}
        \global\tag@false%
        $$\global\@ignoretrue
      \else
        \eqno \hbox{\@eqnnum}
        $$\global\@ignoretrue
      \fi
     \fi\fi
 }
 \newif\iftag@ \tag@false
 \def\TCItag{\@ifnextchar*{\@TCItagstar}{\@TCItag}}
 \def\@TCItag#1{%
     \global\tag@true
     \global\def\@taggnum{(#1)}%
     \global\def\@currentlabel{#1}}
 \def\@TCItagstar*#1{%
     \global\tag@true
     \global\def\@taggnum{#1}%
     \global\def\@currentlabel{#1}}
     \def\tag{\@ifnextchar*{\@tagstar}{\@tag}}
     \def\@tag#1{%
         \global\tag@true
         \global\def\@taggnum{(#1)}}
     \def\@tagstar*#1{%
         \global\tag@true
         \global\def\@taggnum{#1}}
\def\dfrac#1#2{{\displaystyle {#1 \over #2}}}%
\begin{document}

\title{Guaranteeing Positive Secrecy Capacity with Finite-Rate Feedback
using Artificial Noise}
\pubid{}
\specialpapernotice{}
\author{Shuiyin~Liu,~Yi~Hong,~and~Emanuele~Viterbo\thanks{%
S.~Liu, Y.~Hong and E.~Viterbo are with the Department of Electrical and
Computer Systems Engineering, Monash University, Clayton, VIC 3800,
Australia (e-mail: \{shuiyin.liu, yi.hong, emanuele.viterbo\}@monash.edu).
This work was performed at the Monash Software Defined Telecommunications
Lab and the authors were supported by the Australian Research Council
Discovery Project with ARC DP130100336.}}
\maketitle

\begin{abstract}
While the impact of finite-rate feedback on the capacity of fading
channels has been extensively studied in the literature, not much
attention has been paid to this problem under secrecy constraint. In
this work, we study the ergodic secret capacity of a multiple-input
multiple-output multiple-antenna-eavesdropper (MIMOME) wiretap
channel with quantized channel state information (CSI) at the
transmitter and perfect CSI at the legitimate receiver, under the
assumption that only the statistics of eavesdropper CSI is known at
the transmitter. We refine the analysis of the \emph{random vector
quantization} (RVQ) based artificial noise (AN) scheme in
\cite{Shihc11}, where a \emph{heuristic} upper bound on the secrecy
rate loss, when compared to the perfect CSI case, was given. We
propose a lower bound on the ergodic secrecy capacity. We show that
the lower bound and the secrecy capacity with perfect CSI coincide
asymptotically as the number of feedback bits and the AN power go to
infinity. For practical applications, we propose a very efficient
quantization codebook construction method for the two transmit
antennas case.
\end{abstract}

\begin{IEEEkeywords}
artificial noise, secret capacity, physical layer security, wiretap
channel.
\end{IEEEkeywords}

\section{Introduction}

Complexity-based cryptographic technologies (e.g. AES \cite{Daemen:2002:DRA}%
) have traditionally been used to provide a secure gateway for
communications and data exchanges at the network layer. The security is
achieved if an eavesdropper (Eve) without the key cannot decipher the
message in a reasonable amount of time. This premise becomes controversial
with the rapid developments of computing devices (e.g. quantum computer). In
contrast, physical layer security (PLS) does not depend on a specific
computational model and can provide security even when Eve has unlimited
computing power. Wyner \cite{Wyner75} and later Csisz\'{a}r and K\"{o}rner
\cite{Csiszar78} proposed the \emph{wiretap channel} model as a basic
framework for PLS. Wyner has shown that for discrete memoryless channels, if
Eve intercepts a degraded version of the intended receiver's (Bob's) signal,
a prescribed degree of data confidentiality could be simultaneously attained
by channel coding without any secret key. The associated notion of \emph{%
secrecy capacity} was introduced to characterize the maximum transmission
rate from the transmitter (Alice) to Bob, below which Eve is unable to
obtain any information.

Wyner's wiretap channel model has been extended to fading channel \cite%
{Gopala08}, Gaussian broadcast channel \cite{RuohengLiu09}, multiple-input
single-output multiple-antenna-eavesdropper (MISOME) channel \cite%
{Khisti10SO}, and multiple-input multiple-output
multiple-antenna-eavesdropper (MIMOME) channel \cite{Oggier11}. All these
works rely on the perfect knowledge of Bob's channel state information (CSI)
at Alice to compute the secrecy capacity and enable secure encoding. In
particular, Eve's CSI is also assumed to be known at Alice in \cite%
{Gopala08, RuohengLiu09, Oggier11}, although the CSI of a passive Eve is
very hard to be unveiled at Alice. It is more reasonable to assume that
Alice only knows the statistics of Eve's channel. Even the assumption of
perfect knowing Bob's CSI is not realistic. In practice, Bob can only
provide Alice with a quantized version of his CSI via a rate constrained
feedback channel (i.e., finite-rate feedback).

In this work, we are interested in the secrecy capacity conditioned on the
quantized CSI of Bob's channel and the statistics of Eve's channel. While
the impact of finite-rate feedback on the capacity of fading channels has
been extensively studied (see \cite{Love03, Mukkavilli03, Dai08,
Santipach09,Pitaval11}), not much attention has been given to this problem
under secrecy constraint. In \cite{Rezki14}, assuming that Alice only knows
the statistics of Eve's channel, the authors derived lower and upper bounds
on the ergodic secrecy capacity for a single-input single-output
single-antenna-eavesdropper (SISOSE) system with finite-rate feedback of
Bob's CSI. In the MIMOME scenario, the artificial noise (AN) scheme has been
shown to guarantee positive secrecy capacity without knowing Eve's CSI in
\cite{Goel08}. Alice is assumed to have perfect knowledge of Bob's
eigenchannel vectors. This assumption allows her to align artificial noise
within the null space of a MIMO channel between Alice and Bob, so that only
Eve's equivocation is enhanced. In \cite{Shihc11}, the authors show that if
only quantized CSI is available at Alice, the artificial noise will leak
into Bob's channel, causing a decrease in the achievable secrecy rate. A
\emph{heuristic} upper bound on the secrecy rate loss (compared to the
perfect CSI case) is proposed in \cite[Eq. 34]{Shihc11}.

The main contribution of this paper is to provide a lower bound on the
ergodic secrecy capacity for the AN scheme with quantized CSI, valid for any
number of Alice/Bob/Eve antennas, as well as for any Bob/Eve SNR regimes.
Following the work in \cite{Shihc11}, we use the \emph{random vector
quantization} (RVQ) scheme in \cite{Love03}. Namely, given $B$ feedback
bits, Bob quantizes his eigenchannel matrix to one of $N=2^{B}$ random
unitary matrices and feeds back the corresponding index. We first show that
RVQ is asymptotically optimal for security purpose, i.e., the secrecy
capacity/rate loss compared to the perfect CSI case converges to $0$ as $%
B\rightarrow \infty $. This result implies that the heuristic bound in \cite[%
Eq. 34]{Shihc11} is not tight, since it reduces to a positive constant as $%
B\rightarrow \infty $. To refine the analysis in \cite{Shihc11}, we
establish a tighter upper bound on the secrecy rate loss, which leads to an
explicit lower bound on the ergodic secrecy capacity. We further show that
the lower bound and the secrecy capacity with perfect CSI coincide
asymptotically as $B$ and the AN power go to infinity. This allows us to
provide a sufficient condition guaranteeing positive secrecy capacity.

From a practical point of view, it is often desirable to use a deterministic
quantization codebook rather than a random one. The problem of derandomizing
RVQ codebooks is related to discretizing the complex Grassmannian manifold
\cite{Love03,Mukkavilli03}. Since the optimal constructions are possible
only in very special cases, deterministic codebooks are mostly generated by
computer search \cite{Xia06}. Interestingly, the case of codebook design
with two transmit antennas is equivalent to quantizing a real sphere \cite%
{Pitaval11}. According to this fact, we propose a very efficient codebook
construction method for the two-antenna case. Simulation results demonstrate
that near-RVQ performance is achieved by a moderate number of feedback bits.

The paper is organized as follows: Section II presents the system model,
followed by the analysis of secrecy capacity with finite-rate feedback in
Section III. Section IV provides the deterministic quantization codebook
construction method for the two-antenna case. Conclusions are drawn in
Section V. Proofs of the theorems are given in Appendix.

\textit{Notation:} Matrices and column vectors are denoted by upper and
lowercase boldface letters, and the Hermitian transpose, inverse,
pseudoinverse of a matrix $\mathbf{B}$ by $\mathbf{B}^{H}$, $\mathbf{B}^{-1}$%
, and $\mathbf{B}^{\dagger }$, respectively. $|\mathbf{B}|$ denotes the
determinant of $\mathbf{B}$. Let the random variables $\left\{ X_{n}\right\}
$ and $X$ be defined on the same probability space. We write $X_{n}\overset{%
a.s.}{\rightarrow }X$ if $X_{n}$ converges to $X$ almost surely or with
probability one. $\mathbf{I}_{n}$ denotes the identity matrix of size $n$.
An $m\times n$ null matrix is denoted by $\mathbf{0}_{m\times n}$. A
circularly symmetric complex Gaussian random variable $x$ with variance $%
\sigma ^{2}$\ is defined as $x\backsim \mathcal{N}_{\mathbb{C}}(0,\sigma
^{2})$. The real, complex, integer and complex integer numbers are denoted
by $\mathbb{R}$, $\mathbb{C}$, $\mathbb{Z}$ and $\mathbb{Z}\left[ i\right] $%
, respectively. $I(x;y)$ represents the mutual information of two random
variables $x$ and $y$. We use the standard asymptotic notation $f\left(
x\right) =O\left( g\left( x\right) \right) $ when $\lim
\sup\limits_{x\rightarrow \infty }|f(x)/g(x)|<\infty $. $\lceil x\rfloor $
rounds to the closest integer, while $\lfloor x\rfloor $ to the closest
integer smaller than or equal to $x$ and $\left\lceil x\right\rceil $ to the
closest integer larger than or equal to $x $. A central complex Wishart
matrix $\mathbf{A}\in \mathbb{C}^{m\times m}$ with $n$ degrees of freedom
and covariance matrix $\mathbf{\Sigma }$, is defined as $\mathbf{A}\backsim
W_{m}(n$,$\mathbf{\Sigma })$. Trace of a square matrix $\mathbf{B}$ is
denoted by $\Tr\left( \mathbf{B}\right) $. We write $\triangleq $ for
equality in definition.

\section{System Model}

We consider secure communications over a three-terminal system, including\ a
transmitter (Alice), the intended receiver (Bob), and an unauthorized
receiver (Eve), equipped with $N_{\text{A}}$, $N_{\text{B}}$, and $N_{\text{E%
}}$ antennas, respectively. The signal vectors received by Bob and Eve are%
\begin{align}
\mathbf{z}& =\mathbf{Hx}+\mathbf{n}_{\text{B}}\text{,}  \label{B1} \\
\mathbf{y}& =\mathbf{G\mathbf{\mathbf{x}}}+\mathbf{n}_{\text{E}}\text{,}
\label{E1}
\end{align}%
where $\mathbf{x}\in \mathbb{C}^{N_{\text{A}}}$ is the transmit signal
vector, $\mathbf{H}\in \mathbb{C}^{N_{\text{B}}\times N_{\text{A}}}$ and $%
\mathbf{G}\in \mathbb{C}^{N_{\text{E}}\times N_{\text{A}}}$ are the
respective channel matrices between Alice to Bob and Alice to Eve, and $%
\mathbf{n}_{\text{B}}$, $\mathbf{n}_{\text{E}}$ are AWGN vectors with i.i.d.
entries $\sim \mathcal{N}_{\mathbb{C}}(0$, $\sigma _{\text{B}}^{2})$ and $%
\mathcal{N}_{\mathbb{C}}(0$, $\sigma _{\text{E}}^{2})$. We assume that the
entries of $\mathbf{H}$ and $\mathbf{G}$ are i.i.d. complex random variables
$\sim \mathcal{N}_{\mathbb{C}}(0$, $1)$.

Without loss of generality, we normalize Bob's channel noise variance to
one, i.e.,%
\begin{equation}
\sigma _{\text{B}}^{2}=1\text{.}
\end{equation}

In this paper, we assume that Bob knows its own channel matrix $\mathbf{H}$
instantaneously and Eve knows both its own channel matrix $\mathbf{G}$ and
the main channel $\mathbf{H}$, instantaneously; whereas Alice is only aware
of the statistics of $\mathbf{H}$ and $\mathbf{G}$. There is also an
error-free public feedback channel with limited capacity from Bob to Alice
that can be tracked by Eve. In our setting, the feedback is exclusively used
to send the index of the codeword in a quantization codebook that describes
the main channel state information $\mathbf{H}$. The quantization codebook
is assumed to be known \emph{a priori} to Alice, Bob and Eve.

\subsection{Artificial Noise Scheme with Perfect CSI}

The original AN scheme assumes $N_{\text{B}}<N_{\text{A}}$, in order to
ensure that $\mathbf{H}$ has a non-trivial null space with an orthonormal
basis $\mathbf{Z}=\mbox{null}(\mathbf{H})\in \mathbb{C}^{N_{\text{A}}\times
(N_{\text{A}}-N_{\text{B}})}$ (such that $\mathbf{HZ}=\mathbf{0}_{N_{\text{B}%
}\times (N_{\text{A}}-N_{\text{B}})}$) \cite{Goel08}. Let $\mathbf{H}=%
\mathbf{U}\mathbf{\Lambda }\mathbf{V}^{H}$ be the singular value
decomposition (SVD) of $\mathbf{H}$, where $\mathbf{U}\in \mathbb{C}^{N_{%
\text{B}}\times N_{\text{B}}}$ and $\mathbf{V}\in \mathbb{C}^{N_{\text{A}%
}\times N_{\text{A}}}$ are unitary matrices. Then, we can write the unitary
matrix $\mathbf{V}$ as%
\begin{equation}
\mathbf{V=[\tilde{V}},\mathbf{Z]}\text{,}  \label{T0}
\end{equation}%
where the $N_{\text{B}}$ columns of $\mathbf{\tilde{V}}\in \mathbb{C}^{N_{%
\text{A}}\times N_{\text{B}}}$ span the orthogonal complement subspace to
the null space spanned by the columns of $\mathbf{Z}\in \mathbb{C}^{N_{\text{%
A}}\times (N_{\text{A}}-N_{\text{B}})}$.

With unlimited feedback (i.e., perfect CSI), Alice has perfect knowledge of
the precoding matrix $\mathbf{V}$, and transmits%
\begin{equation}
\mathbf{x}=\mathbf{\tilde{V}}\mathbf{u}+\mathbf{Zv}=\mathbf{V}\left[
\begin{array}{c}
\mathbf{u} \\
\mathbf{v}%
\end{array}%
\right] \text{,}  \label{T1}
\end{equation}%
where $\mathbf{u}\in \mathbb{C}^{N_{\text{B}}}$ is the information vector
and $\mathbf{v}\in \mathbb{C}^{(N_{\text{A}}-N_{\text{B}})}$ is the
\textquotedblleft artificial noise\textquotedblright . For the purpose of
evaluating the achievable secrecy rate, both $\mathbf{u}$ and $\mathbf{v}$
are assumed to be circular symmetric Gaussian random vectors with i.i.d.
complex entries $\sim \mathcal{N}_{\mathbb{C}}(0$, $\sigma _{\text{u}}^{2})$
and $\mathcal{N}_{\mathbb{C}}(0$, $\sigma _{\text{v}}^{2})$, respectively.
In \cite{shuiyin14_2}, we have shown that Gaussian input alphabets
asymptotically achieves the secrecy capacity as $\sigma _{\text{v}%
}^{2}\rightarrow \infty $.

Equations (\ref{B1}) and (\ref{E1}) can then be rewritten as%
\begin{align}
\mathbf{z}& =\mathbf{H\mathbf{\mathbf{\tilde{V}}}}\mathbf{u}+\mathbf{HZv+n}_{%
\text{B}}=\mathbf{H\mathbf{\mathbf{\tilde{V}}}}\mathbf{u}+\mathbf{n}_{\text{B%
}}\text{,}  \label{sec_mod2} \\
\mathbf{y}& =\mathbf{G\mathbf{\mathbf{\tilde{V}}}}\mathbf{u}+\mathbf{GZ}%
\mathbf{v}+\mathbf{n}_{\text{E}}\text{,}  \label{Eve_mod2}
\end{align}%
to show that with unlimited feedback, the artificial noise only degrades
Eve's channel, resulting in increased secrecy capacity (compared to the
non-AN case).

\subsection{Artificial Noise Scheme with Quantized CSI}

In \cite{Shihc11}, the authors analyzed the impact of finite-rate feedback
on the secrecy rate achievable by the AN scheme. To quantize the matrix $%
\mathbf{\tilde{V}}$ in (\ref{T0}), the random vector quantization (RVQ)
scheme in \cite{Love03} is used. Given $B$ feedback bits per fading channel,
Bob specifies $\mathbf{\tilde{V}}$ from a random quantization codebook%
\begin{equation}
\mathcal{V}=\left\{ \mathbf{\tilde{V}}_{i},1\leq i\leq 2^{B}\right\} \text{,}
\label{V}
\end{equation}%
where the entries are independent $N_{\text{A}}\times N_{\text{B}}$ random
unitary matrices, i.e., $\mathbf{\tilde{V}}_{i}^{H}\mathbf{\tilde{V}}_{i}=%
\mathbf{I}_{N_{\text{B}}}$. The codebook $\mathcal{V}$ is known \emph{a
priori} to both Alice, Bob and Eve. Bob selects the $\mathbf{\tilde{V}}_{j}$
that minimize the chordal distance between $\mathbf{\tilde{V}}_{i}$ and $%
\mathbf{\tilde{V}}$ \cite{Dai08}:%
\begin{equation}
\mathbf{\tilde{V}}_{j}=\min_{\mathbf{\tilde{V}}_{i}\in \mathcal{V}%
}d^{2}\left( \mathbf{\tilde{V}}_{i}\text{,}\mathbf{\tilde{V}}\right) \text{,}
\label{cost_fun}
\end{equation}%
where
\begin{equation}
d\left( \mathbf{\tilde{V}}_{i}\text{,}\mathbf{\tilde{V}}\right) =N_{\text{B}%
}-\Tr\left( \mathbf{\tilde{V}}^{H}\mathbf{\tilde{V}}_{i}\mathbf{\tilde{V}}%
_{i}^{H}\mathbf{\tilde{V}}\right) \text{.}  \label{Dist}
\end{equation}%
Note that $\Tr\left( \mathbf{A}\right) $ denotes the trace of the square
matrix $\mathbf{A}$. And then, Bob relays the corresponding index $j$ back
to Alice.

Alice generates the precoding matrix from $\mathbf{\tilde{V}}_{j}$ as
follows. Let $\mathbf{\tilde{v}}_{1}$, ... , $\mathbf{\tilde{v}}_{N_{\text{B}%
}}$ be the columns of $\mathbf{\tilde{V}}_{j}$, and $\mathbf{e}_{1}$, ... , $%
\mathbf{e}_{N_{\text{A}}-N_{\text{B}}}$ be the standard basis vectors. Alice
applies the Gram-Schmidt algorithm to the matrix%
\begin{equation*}
\left[ \mathbf{\tilde{v}}_{1}\text{, ... , }\mathbf{\tilde{v}}_{N_{\text{B}}}%
\text{, }\mathbf{e}_{1}\text{, ... , }\mathbf{e}_{N_{\text{A}}-N_{\text{B}}}%
\right]
\end{equation*}%
to generate the remaining orthonormal basis vectors spanning the orthogonal
complement space to the one generated by the columns of $\mathbf{\tilde{V}}%
_{j}$. This provides Alice with a unitary matrix%
\begin{equation}
\mathbf{\hat{V}=[\tilde{V}}_{j},\mathbf{\hat{Z}]}\in \mathbb{C}^{N_{\text{A}%
}\times N_{\text{A}}}\text{,}  \label{V_hat}
\end{equation}%
that can be used to precode $\mathbf{u}$ and $\mathbf{v}$ as in (\ref{T1}).

Since $\mathbf{\hat{Z}\neq Z}$, the interference term $\mathbf{H\hat{Z}v}$
cannot be nulled at Bob. Therefore, equations (\ref{sec_mod2}) and (\ref%
{Eve_mod2}) reduce to%
\begin{align}
\mathbf{z}& =\mathbf{H\tilde{V}}_{j}\mathbf{u}+\mathbf{H\hat{Z}v+n}_{\text{B}%
}\text{,}  \label{sec_mod3} \\
\mathbf{y}& =\mathbf{G\tilde{V}}_{j}\mathbf{u}+\mathbf{G\hat{Z}v}+\mathbf{n}%
_{\text{E}}\text{,}  \label{Eve_mod3}
\end{align}%
and show that with finite rate feedback (i.e., quantized CSI), some of the
artificial noise will inevitably leak into the main channel from Alice to
Bob, causing degradation in the secrecy capacity (compared to the unlimited
feedback case).

\subsection{Assumptions and Notations}

The analysis in \cite{Goel08, Shihc11} are based on the assumption of $N_{%
\text{E}}<N_{\text{A}}$. Clearly, this assumption is not always realistic.
In this work, we remove this assumption and evaluate the secrecy capacity
for any number of Eve antennas.

Since $\mathbf{\hat{V}}$ in (\ref{V_hat}) is a unitary matrix, the total
transmission power can be written as%
\begin{equation}
||\mathbf{x}||^{2}=\left[
\begin{array}{c}
\mathbf{u} \\
\mathbf{v}%
\end{array}%
\right] ^{H}\mathbf{\hat{V}}^{H}\mathbf{\hat{V}}\left[
\begin{array}{c}
\mathbf{u} \\
\mathbf{v}%
\end{array}%
\right] =||\mathbf{u}||^{2}+||\mathbf{v}||^{2}\text{.}  \label{Total_power}
\end{equation}%
Then the average transmit power constraint $P$ is%
\begin{equation}
P=\text{E}(||\mathbf{x}||^{2})=P_{\text{u}}+P_{\text{v}}\text{,}
\label{Power_Cons}
\end{equation}%
where%
\begin{equation}
\begin{array}{lll}
P_{\text{u}} & = & \text{E}(||\mathbf{u}||^{2})=\sigma _{\text{u}}^{2}N_{%
\text{B}}\text{,} \\
P_{\text{v}} & = & \text{E}(||\mathbf{v}||^{2})=\sigma _{\text{v}}^{2}(N_{%
\text{A}}-N_{\text{B}})\text{,}%
\end{array}
\label{Power_c2}
\end{equation}%
are fixed by the power allocation scheme that selects the power balance
between $\sigma _{\text{u}}^{2}$ and $\sigma _{\text{v}}^{2}$.

We define Bob's and Eve's SNRs as

\begin{itemize}
\item SNR$_{\text{B}}\triangleq \sigma _{\text{u}}^{2}/\sigma _{\text{B}%
}^{2} $

\item SNR$_{\text{E}}\triangleq \sigma _{\text{u}}^{2}/\sigma _{\text{E}%
}^{2} $
\end{itemize}

To simplify our notation, we define three system parameters:

\begin{itemize}
\item $\alpha \triangleq \sigma _{\text{u}}^{2}/\sigma _{\text{E}}^{2}=$ SNR$%
_{\text{E}}$

\item $\beta \triangleq \sigma _{\text{v}}^{2}/\sigma _{\text{u}}^{2}$ (AN
power allocation)

\item $\gamma \triangleq \sigma _{\text{E}}^{2}/\sigma _{\text{B}}^{2}$
(Eve-to-Bob noise-power ratio)
\end{itemize}

Note that SNR$_{\text{B}}=\alpha \gamma $. If $\gamma >1$, we say Eve has a
\emph{degraded} channel. Since we have normalized $\sigma _{\text{B}}^{2}$
to one, we\ can rewrite (\ref{Power_c2}) as

\begin{itemize}
\item $P_{\text{u}}=\alpha \gamma N_{\text{B}}$

\item $P_{\text{v}}=\alpha \beta \gamma (N_{\text{A}}-N_{\text{B}})$
\end{itemize}

\subsection{Instantaneous and Ergodic Secrecy Capacities}

We recall from \cite{Oggier11} the definition of instantaneous secrecy
capacity for MIMOME channel:%
\begin{equation}
C_{\text{S}}\triangleq \max_{p\left( \mathbf{u}\right) }\left\{ I(\mathbf{%
u;z)-}I(\mathbf{u;y)}\right\} \text{.}  \label{CS}
\end{equation}%
where the maximum is taken over all possible input distributions $p\left(
\mathbf{u}\right) $. We remark that $C_{\text{S}}$ is a function of $\mathbf{%
H}$ and $\mathbf{G}$, which are embedded in $\mathbf{z}$ and $\mathbf{y}$.
To average out the channel randomness, we further define the ergodic secrecy
capacity, as in \cite{Goel08}%
\begin{equation}
\E(C_{\text{S}})\triangleq \max_{p\left( \mathbf{u}\right) }\left\{ I(%
\mathbf{u;z|H)-}I(\mathbf{u;y|H}\text{, }\mathbf{G)}\right\} \text{,}
\label{CS_ave}
\end{equation}%
where $I\left( X;Y|Z\right) \triangleq \text{E}_{Z}\left[ I\left( X;Y\right)
|Z\right] $, following the notation in \cite{Telatar99}.

Since closed form expressions for $C_{\text{S}}$ and $\E(C_{\text{S}})$ are
not always available, we often consider the corresponding secrecy rates,
given by%
\begin{equation}
R_{\text{S}}\triangleq I(\mathbf{u;z)-}I\mathbf{(\mathbf{u;y)}}\text{,}
\label{CLB}
\end{equation}%
\begin{equation}
\E(R_{\text{S}})\triangleq I(\mathbf{u;z|H)-}I\mathbf{(\mathbf{u;y|H},%
\mathbf{G)}}\text{,}  \label{CLB_ave}
\end{equation}%
assuming Gaussian input alphabets, i.e., $\mathbf{v}$ and $\mathbf{u}$ are
mutually independent Gaussian vectors with i.i.d. complex entries $\mathcal{N%
}_{\mathbb{C}}(0$, $\sigma _{\text{v}}^{2})$ and $\mathcal{N}_{\mathbb{C}}(0$%
, $\sigma _{\text{u}}^{2})$, respectively.

From (\ref{sec_mod2}), (\ref{Eve_mod2}) and (\ref{CLB}), the achievable
secrecy rate with perfect CSI can be written as%
\begin{align}
R_{\text{S}}& =\log \left\vert \mathbf{I}_{N_{\text{B}}}+\alpha \gamma
\mathbf{HH}^{H}\right\vert +\log \left\vert \mathbf{I}_{N_{\text{E}}}+\alpha
\beta (\mathbf{GZ)}(\mathbf{GZ)}^{H}\right\vert  \notag \\
& -\log \left\vert \mathbf{I}_{N_{\text{E}}}+\alpha (\mathbf{G\tilde{V}})(%
\mathbf{G\tilde{V}})^{H}+\alpha \beta (\mathbf{GZ)}(\mathbf{GZ)}%
^{H}\right\vert \text{.}  \label{SR_P_CSI}
\end{align}%
The closed-form expression of $\E(R_{\text{S}})$ can be found in \cite[Th. 1]%
{shuiyin14_2}. It is shown that $\E(R_{\text{S}})\rightarrow \E(C_{\text{S}%
}) $ as the AN power $P_{\text{v}}$ $\rightarrow \infty $ in \cite[Th. 3]%
{shuiyin14_2}.

From (\ref{sec_mod3}), (\ref{Eve_mod3}) and (\ref{CLB}), the achievable
secrecy rate with quantized CSI can be written as%
\begin{align}
R_{\text{S,Q}}& =\log \dfrac{\left\vert \mathbf{I}_{N_{\text{B}}}+\alpha
\gamma (\mathbf{H\tilde{V}}_{j})(\mathbf{H\tilde{V}}_{j})^{H}+\alpha \beta
\gamma (\mathbf{H\hat{Z})}(\mathbf{H\hat{Z})}^{H}\right\vert }{\left\vert
\mathbf{I}_{N_{\text{B}}}+\alpha \beta \gamma (\mathbf{H\hat{Z})}(\mathbf{H%
\hat{Z})}^{H}\right\vert }  \notag \\
& -\log \dfrac{\left\vert \mathbf{I}_{N_{\text{E}}}+\alpha (\mathbf{GV}%
_{1,j})(\mathbf{G\tilde{V}}_{j})^{H}+\alpha \beta (\mathbf{G\hat{Z})}(%
\mathbf{G\hat{Z})}^{H}\right\vert }{\left\vert \mathbf{I}_{N_{\text{E}%
}}+\alpha \beta (\mathbf{G\hat{Z})}(\mathbf{G\hat{Z})}^{H}\right\vert }\text{%
.}  \label{SR_Q_CSI}
\end{align}

\subsection{Open Problems and Motivations}

Using \cite[Eq. 2, pp. 56]{Helmut96}, it is simple to show that%
\begin{equation}
\E(R_{\text{S}})\geq \E(R_{\text{S,Q}})\text{.}
\end{equation}%
In \cite{Shihc11}, the \emph{ergodic secrecy rate loss} is defined by:%
\begin{equation}
\E(\Delta R_{\text{S}})\triangleq \E(R_{\text{S}})-\E(R_{\text{S,Q}})\text{.}
\label{D_R}
\end{equation}%
A heuristic upper bound was proposed in \cite[Eq. 34]{Shihc11}:%
\begin{align}
\E(\Delta R_{\text{S}})& \lessapprox N_{\text{B}}\log \left( \dfrac{N_{\text{%
B}}+\alpha \beta \gamma N_{\text{A}}D\left( N_{\text{A}},N_{\text{B}%
},2^{B}\right) }{N_{\text{B}}-D\left( N_{\text{A}},N_{\text{B}},2^{B}\right)
}\right)  \notag \\
& +N_{\text{B}}\log \left( 1+\dfrac{1}{\alpha \gamma \left( N_{\text{A}}-N_{%
\text{B}}\right) }\right) \triangleq \text{UB}_{\text{heuristic}}\text{.}
\label{Old_UB}
\end{align}%
where%
\begin{equation}
D\left( N_{\text{A}},N_{\text{B}},2^{B}\right) =\E\left( d^{2}\left( \mathbf{%
\tilde{V}}_{j}\text{,}\mathbf{\tilde{V}}\right) \right) \text{,}  \label{DD}
\end{equation}%
and $d\left( \cdot ,\cdot \right) $ is give in (\ref{Dist}).

However, (\ref{Old_UB}) is insufficient to characterize the impact of
quantized CSI on the secrecy rate achievable by the AN scheme. To see this,
we first provide the following theorem.

\begin{theorem}
\label{Th1}For the RVQ-based AN scheme, as $B\rightarrow \infty $,%
\begin{equation}
\mathbf{\hat{V}\rightarrow V}\text{,}
\end{equation}%
where $\mathbf{V}$ is given in (\ref{T0}) and $\mathbf{\hat{V}}$ is given in
(\ref{V_hat}).
\end{theorem}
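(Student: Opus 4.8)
\emph{Plan of proof.} The statement concerns the limiting behaviour of the random quantizer as the codebook size $N=2^{B}$ grows, so the plan is to combine a covering argument on the complex Stiefel manifold with a continuity argument for the Gram--Schmidt completion. First I would condition on the realization of $\mathbf{H}$, hence on the target $\tilde{\mathbf{V}}$ of (\ref{T0}), and remove the conditioning only at the end. Under RVQ the codewords $\tilde{\mathbf{V}}_{1},\dots,\tilde{\mathbf{V}}_{2^{B}}$ are i.i.d.\ Haar-distributed $N_{\text{A}}\times N_{\text{B}}$ matrices with orthonormal columns, and $d^{2}(\cdot\,,\tilde{\mathbf{V}})$ of (\ref{Dist}) is a continuous, bounded functional on that compact manifold. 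The first step is to show $d^{2}(\tilde{\mathbf{V}}_{j},\tilde{\mathbf{V}})\to 0$: for each $\varepsilon>0$ the ball $\{\mathbf{W}:d^{2}(\mathbf{W},\tilde{\mathbf{V}})<\varepsilon\}$ is open and nonempty, hence carries strictly positive Haar mass $p_{\varepsilon}$, so $\Pr[\min_{i}d^{2}(\tilde{\mathbf{V}}_{i},\tilde{\mathbf{V}})\ge\varepsilon]=(1-p_{\varepsilon})^{2^{B}}$, which is summable in $B$; Borel--Cantelli then yields $d^{2}(\tilde{\mathbf{V}}_{j},\tilde{\mathbf{V}})\to 0$ almost surely. (Alternatively, I could simply quote the RVQ distortion $D(N_{\text{A}},N_{\text{B}},2^{B})=\E(d^{2}(\tilde{\mathbf{V}}_{j},\tilde{\mathbf{V}}))$ of (\ref{DD}), known from \cite{Love03,Dai08} to vanish as $B\to\infty$, which gives convergence in mean and hence in probability at once.)

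The second step is to turn ``small chordal distance'' into ``converging subspaces.'' Writing $d(\tilde{\mathbf{V}}_{j},\tilde{\mathbf{V}})=N_{\text{B}}-\norm{\tilde{\mathbf{V}}^{H}\tilde{\mathbf{V}}_{j}}_{F}^{2}$, the limit forces $\norm{\tilde{\mathbf{V}}^{H}\tilde{\mathbf{V}}_{j}}_{F}^{2}\to N_{\text{B}}$, which is equivalent to convergence of the orthogonal projectors $\tilde{\mathbf{V}}_{j}\tilde{\mathbf{V}}_{j}^{H}\to\tilde{\mathbf{V}}\tilde{\mathbf{V}}^{H}$, i.e.\ $\Span(\tilde{\mathbf{V}}_{j})\to\Span(\tilde{\mathbf{V}})$. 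Then I would use that $\hat{\mathbf{Z}}$ is produced from $[\tilde{\mathbf{v}}_{1},\dots,\tilde{\mathbf{v}}_{N_{\text{B}}},\mathbf{e}_{1},\dots,\mathbf{e}_{N_{\text{A}}-N_{\text{B}}}]$ by Gram--Schmidt, a map that is continuous wherever those vectors are linearly independent --- which holds for a.e.\ $\mathbf{H}$, since a generic $N_{\text{B}}$-dimensional subspace meets $\Span(\mathbf{e}_{1},\dots,\mathbf{e}_{N_{\text{A}}-N_{\text{B}}})$ only trivially --- so $\Span(\hat{\mathbf{Z}})\to\Span(\mathbf{Z})=\mbox{null}(\mathbf{H})$. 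Combining the two convergences gives $\hat{\mathbf{V}}=[\tilde{\mathbf{V}}_{j},\hat{\mathbf{Z}}]\to[\tilde{\mathbf{V}},\mathbf{Z}]=\mathbf{V}$, and since this holds for Lebesgue-a.e.\ $\mathbf{H}$ the conditioning can be dropped.

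The step I expect to be the main obstacle --- and which I would flag explicitly --- is the gap between convergence of \emph{matrices} and convergence of \emph{subspaces}: the chordal distance of (\ref{Dist}) sees only the column space, so $d^{2}\to 0$ pins $\tilde{\mathbf{V}}_{j}$, and hence $\hat{\mathbf{V}}$, down only up to a right block-unitary factor fixing $\Span(\tilde{\mathbf{V}})$ and $\mbox{null}(\mathbf{H})$. Thus ``$\hat{\mathbf{V}}\to\mathbf{V}$'' must be read modulo that ambiguity --- cleanest is convergence of the projectors $\tilde{\mathbf{V}}_{j}\tilde{\mathbf{V}}_{j}^{H}$ and $\hat{\mathbf{Z}}\hat{\mathbf{Z}}^{H}$ to $\tilde{\mathbf{V}}\tilde{\mathbf{V}}^{H}$ and $\mathbf{ZZ}^{H}$ --- and I would add the (immediate) observation that every term of $R_{\text{S,Q}}$ in (\ref{SR_Q_CSI}) depends on $\tilde{\mathbf{V}}_{j},\hat{\mathbf{Z}}$ only through these projectors, so the ambiguity is harmless for the secrecy-rate-loss analysis that follows. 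The only other point needing a line of justification is the strict positivity $p_{\varepsilon}>0$ (equivalently, the decay of $D(N_{\text{A}},N_{\text{B}},2^{B})$), which holds because Haar measure charges every open subset of the compact homogeneous Stiefel manifold.
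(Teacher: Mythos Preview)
Your proposal is correct and, on the first step, more thorough than the paper: where you supply a self-contained Borel--Cantelli covering argument (with the distortion bound of \cite{Dai08} as an alternative), the paper simply cites \cite{Santipach09} for $\tilde{\mathbf{V}}_{j}\to\tilde{\mathbf{V}}$ and moves on. The genuine difference is in how $\hat{\mathbf{Z}}$ is handled. You appeal to continuity of the Gram--Schmidt map together with a genericity condition on the auxiliary vectors $\mathbf{e}_{1},\dots,\mathbf{e}_{N_{\text{A}}-N_{\text{B}}}$; the paper instead exploits the built-in orthogonality $\tilde{\mathbf{V}}_{j}^{H}\hat{\mathbf{Z}}=\mathbf{0}$ and the SVD $\mathbf{H}=\mathbf{U}\mathbf{\Lambda}\tilde{\mathbf{V}}^{H}$ to write $\mathbf{H}\hat{\mathbf{Z}}=\mathbf{U}\mathbf{\Lambda}\tilde{\mathbf{V}}^{H}\hat{\mathbf{Z}}\to\mathbf{U}\mathbf{\Lambda}\tilde{\mathbf{V}}_{j}^{H}\hat{\mathbf{Z}}=\mathbf{0}$, so $\hat{\mathbf{Z}}\to\mathrm{null}(\mathbf{H})$ without ever touching Gram--Schmidt. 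The paper's route is shorter and neatly avoids your linear-independence caveat; your route is more explicit about what mode of convergence is actually being proved. In particular, your flag about the block-unitary ambiguity --- that chordal distance pins down only the projectors $\tilde{\mathbf{V}}_{j}\tilde{\mathbf{V}}_{j}^{H}$ and $\hat{\mathbf{Z}}\hat{\mathbf{Z}}^{H}$, not the matrices themselves --- is a point the paper leaves implicit, and your observation that every term of (\ref{SR_Q_CSI}) depends on $\tilde{\mathbf{V}}_{j},\hat{\mathbf{Z}}$ only through those projectors is precisely what closes that gap for the downstream secrecy-rate analysis.
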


\begin{proof}
See Appendix A.
\end{proof}

Theorem \ref{Th1} shows the RVQ scheme is asymptotically optimal for large $%
B $, i.e., the secrecy capacity/rate loss (compared to the perfect CSI case)
converges to zero. In contrast, as $B\rightarrow \infty $, UB$_{\text{%
heuristic}}$ in (\ref{Old_UB}) reduces to a positive constant:%
\begin{equation}
\text{UB}_{\text{heuristic}}\rightarrow N_{\text{B}}\log \left( 1+\dfrac{1}{%
\alpha \gamma \left( N_{\text{A}}-N_{\text{B}}\right) }\right) \text{,}
\label{uc}
\end{equation}%
since $D\left( N_{\text{A}},N_{\text{B}},2^{B}\right) \rightarrow 0$ as $%
B\rightarrow \infty $ \cite{Dai08}. Hence, the heuristic bound in (\ref%
{Old_UB}) is not tight.

\begin{remark}
The ergodic secrecy capacity with quantized CSI, denoted by $\E(C_{\text{S,Q}%
})$, is lower bounded by%
\begin{equation}
\E(C_{\text{S,Q}})\geq \E(R_{\text{S,Q}})=\E(R_{\text{S}})-\E(\Delta R_{%
\text{S}})\text{.}  \label{Bound_M}
\end{equation}%
Using the closed-form expression of $\E(R_{\text{S}})$ given in \cite[Th. 1]%
{shuiyin14_2}, we are motivated to establish a tighter upper bound on $\E%
(\Delta R_{\text{S}})$, which allows us to obtain a lower bound on $\E(C_{%
\text{S,Q}})$.
\end{remark}

\section{Secrecy Capacity with Quantized CSI}

In this section, we wish to determine the secrecy capacity with RVQ scheme.
A tight upper bound on the ergodic secrecy rate loss and a lower bound on
the ergodic secrecy capacity are provided in Theorem 2 and Theorem 4,
respectively. In Theorem 5, we show that the lower bound and the secrecy
capacity with perfect/quantized CSI coincide asymptotically as $B$ and $P_{%
\text{v}}$ go to infinity. This provides a sufficient condition guaranteeing
positive secrecy capacity.

To describe our result, we first recall the following function from \cite%
{shin03}:%
\begin{align}
& \Theta (m,n,x)\triangleq
e^{-1/x}\sum_{k=0}^{m-1}\sum_{l=0}^{k}\sum_{i=0}^{2l}\Bigg\{\dfrac{%
\displaystyle(-1)^{i}(2l)!(n-m+i)!}{\displaystyle2^{2k-i}l!i!(n-m+l)!}
\notag \\
& \cdot \left( \!\!\!%
\begin{array}{c}
2(k-l) \\
k-l%
\end{array}%
\!\!\!\right) \cdot \left( \!\!\!%
\begin{array}{c}
2(l+n-m) \\
2l-i%
\end{array}%
\!\!\!\right) \cdot \sum_{j=0}^{n-m+i}x^{-j}\Gamma (-j,1/x)\Bigg\}\text{,}
\label{Cap_CF}
\end{align}%
where
\scalebox{1}[0.8]{$\Big(\!\!\!
\begin{array}{c}
a \\
b\end{array}\!\!\!\Big)$} $=a!/((a-b)!b!)$ is the binomial coefficient, $%
n\geq m$ are positive integers, and $\Gamma (a,b)$ is the incomplete Gamma
function%
\begin{equation}
\Gamma (a,b)=\int_{b}^{\infty }x^{a-1}e^{-x}dx\text{.}
\end{equation}

We further define%
\begin{eqnarray}
N_{\min } &\triangleq &\min \left\{ N_{\text{E}}\text{, }N_{\text{A}}-N_{%
\text{B}}\right\} \text{,}  \label{n_min} \\
N_{\max } &\triangleq &\max \left\{ N_{\text{E}}\text{, }N_{\text{A}}-N_{%
\text{B}}\right\} \text{,}  \label{n_max}
\end{eqnarray}%
\begin{eqnarray}
\hat{N}_{\min } &\triangleq &\min \left\{ N_{\text{E}}\text{, }N_{\text{A}%
}\right\} \text{,}  \label{n_hat_min} \\
\hat{N}_{\max } &\triangleq &\max \left\{ N_{\text{E}}\text{, }N_{\text{A}%
}\right\} \text{.}  \label{n_hat_max}
\end{eqnarray}

Finally, we define a set of $N_{\text{A}}$ power ratios $\left\{ \theta
_{i}\right\} _{1}^{N_{\text{A}}}$, where%
\begin{equation}
\theta _{i}\triangleq \QATOPD\{ . {\alpha ~~~~~~~~~~1\leq i\leq N_{\text{B}%
}}{\alpha \beta ~~~N_{\text{B}}+1\leq i\leq N_{\text{A}}}  \label{theta}
\end{equation}

We recall from \cite[Th. 4]{Dai08} that%
\begin{equation}
\mu \left( N_{\text{A}},N_{\text{B}},2^{B}\right) \leq D\left( N_{\text{A}%
},N_{\text{B}},2^{B}\right) \leq \eta \left( N_{\text{A}},N_{\text{B}%
},2^{B}\right) \text{,}  \label{bbound}
\end{equation}%
where $D\left( \cdot ,\cdot ,\cdot \right) $ is given in (\ref{DD}) and%
\begin{eqnarray}
\eta \left( n,p,K\right) &=&\dfrac{\Gamma \left( \dfrac{1}{p(n-p)}\right) }{%
p(n-p)}\left( Kc\left( n,p\right) \right) ^{-\dfrac{1}{p(n-p)}}  \notag \\
&&+p\exp (-\left( Kc\left( n,p\right) \right) ^{1-\zeta })\text{,}
\label{niu}
\end{eqnarray}%
\begin{equation}
\mu \left( n,p,K\right) =\dfrac{p(n-p)}{p(n-p)+1}\left( Kc\left( n,p\right)
\right) ^{-\dfrac{1}{p(n-p)}\text{,}}  \label{piu}
\end{equation}%
\begin{equation}
c\left( n,p\right) =\left\{
\begin{array}{l}
\dfrac{1}{\Gamma (p(n-p)+1)}\prod\limits_{i=1}^{p}\dfrac{\Gamma (n-i+1)}{%
\Gamma (p-i+1)}\text{, }~~~~~~~n\geq 2p \\
\dfrac{1}{\Gamma (p(n-p)+1)}\prod\limits_{i=1}^{n-p}\dfrac{\Gamma (n-i+1)}{%
\Gamma (n-p-i+1)}\text{, }~~n\leq 2p%
\end{array}%
\right.  \label{cp}
\end{equation}%
for any $0<\zeta <1$. Note that $\Gamma (a)$ is the Gamma function.

\subsection{Bounds on Ergodic/Instantaneous Secrecy Rate Loss}

We first consider the ergodic secrecy rate loss $\E(\Delta R_{\text{S}})$.

\begin{theorem}
\label{Th2}Let $\theta _{\min }=\min \left\{ \alpha \gamma ,\alpha \beta
\gamma \right\} $. We have%
\begin{align}
& \E(\Delta R_{\text{S}})\leq \Theta (N_{\text{B}},N_{\text{A}},\alpha
\gamma )-\Theta (N_{\text{B}},N_{\text{A}},\theta _{\min })  \notag \\
& +\Theta \left( N_{\text{B}},N_{\text{A}},\alpha \beta \gamma \dfrac{\eta
\left( N_{\text{A}},N_{\text{B}},2^{B}\right) }{N_{\text{B}}}\right)
\triangleq \text{UB,}  \label{MMO}
\end{align}%
where $\Theta (\cdot ,\cdot ,\cdot )$ is given in (\ref{Cap_CF}) and $\eta
(\cdot ,\cdot ,\cdot )$ is given in (\ref{niu}).
\end{theorem}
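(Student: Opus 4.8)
The starting point is $\E(\Delta R_{\text{S}})=\E(R_{\text{S}})-\E(R_{\text{S,Q}})$ with $R_{\text{S}}$, $R_{\text{S,Q}}$ as in (\ref{SR_P_CSI}) and (\ref{SR_Q_CSI}). The first step is to show that the two eavesdropper contributions are equal in expectation and hence cancel. Conditioning on $(\mathbf{H},\mathcal{V})$, the matrices $[\mathbf{\tilde{V}},\mathbf{Z}]$ and $[\mathbf{\tilde{V}}_{j},\mathbf{\hat{Z}}]$ are fixed $N_{\text{A}}\times N_{\text{A}}$ unitary matrices, and since $\mathbf{G}$ has i.i.d.\ $\mathcal{N}_{\mathbb{C}}(0,1)$ entries and is independent of $(\mathbf{H},\mathcal{V})$, both $\mathbf{G}[\mathbf{\tilde{V}},\mathbf{Z}]\overset{d}{=}\mathbf{G}$ and $\mathbf{G}[\mathbf{\tilde{V}}_{j},\mathbf{\hat{Z}}]\overset{d}{=}\mathbf{G}$; hence the conditional expectations of $\log|\mathbf{I}_{N_{\text{E}}}+\alpha\beta(\mathbf{GZ})(\mathbf{GZ})^{H}|$ and of $\log|\mathbf{I}_{N_{\text{E}}}+\alpha(\mathbf{G\tilde{V}})(\mathbf{G\tilde{V}})^{H}+\alpha\beta(\mathbf{GZ})(\mathbf{GZ})^{H}|$ are constants, independent of $(\mathbf{H},\mathcal{V})$, that coincide with the corresponding quantities built from $\mathbf{\tilde{V}}_{j},\mathbf{\hat{Z}}$. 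Averaging over $(\mathbf{H},\mathcal{V})$ removes all the $\mathbf{G}$-terms and leaves
\begin{align}
\E(\Delta R_{\text{S}}) ={}& \E\left[\log|\mathbf{I}_{N_{\text{B}}}+\alpha\gamma\mathbf{H}\mathbf{H}^{H}|\right]+\E\left[\log|\mathbf{I}_{N_{\text{B}}}+\alpha\beta\gamma\mathbf{H}\mathbf{\hat{Z}}\mathbf{\hat{Z}}^{H}\mathbf{H}^{H}|\right] \notag\\
&-\E\left[\log|\mathbf{I}_{N_{\text{B}}}+\alpha\gamma\mathbf{H}\mathbf{\tilde{V}}_{j}\mathbf{\tilde{V}}_{j}^{H}\mathbf{H}^{H}+\alpha\beta\gamma\mathbf{H}\mathbf{\hat{Z}}\mathbf{\hat{Z}}^{H}\mathbf{H}^{H}|\right]. \notag
\end{align}

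Next I would bound the three terms. As $\mathbf{H}\mathbf{H}^{H}\backsim W_{N_{\text{B}}}(N_{\text{A}},\mathbf{I}_{N_{\text{B}}})$ and $\Theta(N_{\text{B}},N_{\text{A}},x)$ in (\ref{Cap_CF}) equals $\E[\log|\mathbf{I}_{N_{\text{B}}}+x\mathbf{H}\mathbf{H}^{H}|]$, the first term is $\Theta(N_{\text{B}},N_{\text{A}},\alpha\gamma)$; for later use, $x\mapsto\Theta(N_{\text{B}},N_{\text{A}},x)$ is increasing and concave on $[0,\infty)$, being an average of the functions $\log(1+x\lambda)$ over the positive eigenvalues $\lambda$ of $\mathbf{H}\mathbf{H}^{H}$. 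For the third (subtracted) term, unitarity of $[\mathbf{\tilde{V}}_{j},\mathbf{\hat{Z}}]$ gives $\mathbf{\tilde{V}}_{j}\mathbf{\tilde{V}}_{j}^{H}+\mathbf{\hat{Z}}\mathbf{\hat{Z}}^{H}=\mathbf{I}_{N_{\text{A}}}$, so $\alpha\gamma\mathbf{H}\mathbf{\tilde{V}}_{j}\mathbf{\tilde{V}}_{j}^{H}\mathbf{H}^{H}+\alpha\beta\gamma\mathbf{H}\mathbf{\hat{Z}}\mathbf{\hat{Z}}^{H}\mathbf{H}^{H}\succeq\theta_{\min}\mathbf{H}\mathbf{H}^{H}$, and monotonicity of $\log\det$ on the positive definite cone shows this term is at least $\Theta(N_{\text{B}},N_{\text{A}},\theta_{\min})$; since it enters with a minus sign, this is exactly what is needed for an upper bound on $\E(\Delta R_{\text{S}})$.

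The artificial-noise leakage term $\E[\log|\mathbf{I}_{N_{\text{B}}}+\alpha\beta\gamma\mathbf{H}\mathbf{\hat{Z}}\mathbf{\hat{Z}}^{H}\mathbf{H}^{H}|]$ is the crux. Writing the SVD $\mathbf{H}=\mathbf{U}\mathbf{\Lambda}\mathbf{V}^{H}$ with $\mathbf{\Lambda}=[\mathbf{\Lambda}_{0},\mathbf{0}]$, $\mathbf{\Lambda}_{0}\in\mathbb{C}^{N_{\text{B}}\times N_{\text{B}}}$ diagonal, and $\mathbf{V}=[\mathbf{\tilde{V}},\mathbf{Z}]$ as in (\ref{T0}), so that $\mathbf{H}=\mathbf{U}\mathbf{\Lambda}_{0}\mathbf{\tilde{V}}^{H}$, I get $\mathbf{H}\mathbf{\hat{Z}}=\mathbf{U}\mathbf{\Lambda}_{0}\mathbf{W}$ with $\mathbf{W}\triangleq\mathbf{\tilde{V}}^{H}\mathbf{\hat{Z}}$, and, using $\mathbf{\hat{Z}}\mathbf{\hat{Z}}^{H}=\mathbf{I}_{N_{\text{A}}}-\mathbf{\tilde{V}}_{j}\mathbf{\tilde{V}}_{j}^{H}$ and (\ref{cost_fun}), $\Tr(\mathbf{W}\mathbf{W}^{H})=N_{\text{B}}-\Tr(\mathbf{\tilde{V}}^{H}\mathbf{\tilde{V}}_{j}\mathbf{\tilde{V}}_{j}^{H}\mathbf{\tilde{V}})=d^{2}(\mathbf{\tilde{V}}_{j},\mathbf{\tilde{V}})$. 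The structural observation is that the quantization rule (\ref{cost_fun}) is invariant under $\mathbf{\tilde{V}}\mapsto\mathbf{\tilde{V}}\mathbf{Q}$ for any $N_{\text{B}}\times N_{\text{B}}$ unitary $\mathbf{Q}$ (it leaves the selected index, hence $\mathbf{\tilde{V}}_{j}$ and $\mathbf{\hat{Z}}$, unchanged, while sending $\mathbf{W}\mapsto\mathbf{Q}^{H}\mathbf{W}$), so, combined with the rotational invariance of $\mathbf{\tilde{V}}$ and the independence of $\mathcal{V}$ from $\mathbf{H}$, the conditional law of $\mathbf{W}\mathbf{W}^{H}$ given $d^{2}(\mathbf{\tilde{V}}_{j},\mathbf{\tilde{V}})$ is invariant under unitary conjugation; therefore $\E[\mathbf{W}\mathbf{W}^{H}\mid d^{2}(\mathbf{\tilde{V}}_{j},\mathbf{\tilde{V}})]=\frac{1}{N_{\text{B}}}d^{2}(\mathbf{\tilde{V}}_{j},\mathbf{\tilde{V}})\,\mathbf{I}_{N_{\text{B}}}$. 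Moreover, $(\mathbf{\tilde{V}},\mathcal{V})$, and hence $\mathbf{W}\mathbf{W}^{H}$ and $d^{2}(\mathbf{\tilde{V}}_{j},\mathbf{\tilde{V}})$, is independent of $(\mathbf{U},\mathbf{\Lambda}_{0})$ by the isotropy of $\mathbf{H}$.

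Finally, since $\mathbf{P}\mapsto\log|\mathbf{I}_{N_{\text{B}}}+\alpha\beta\gamma\mathbf{U}\mathbf{\Lambda}_{0}\mathbf{P}\mathbf{\Lambda}_{0}\mathbf{U}^{H}|$ is concave on the positive semidefinite cone, conditioning on $(d^{2}(\mathbf{\tilde{V}}_{j},\mathbf{\tilde{V}}),\mathbf{U},\mathbf{\Lambda}_{0})$ and applying Jensen's inequality at $\mathbf{P}=\mathbf{W}\mathbf{W}^{H}$ bounds the leakage term by $\E[\log|\mathbf{I}_{N_{\text{B}}}+\frac{\alpha\beta\gamma}{N_{\text{B}}}d^{2}(\mathbf{\tilde{V}}_{j},\mathbf{\tilde{V}})\,\mathbf{H}\mathbf{H}^{H}|]$; conditioning next on $d^{2}(\mathbf{\tilde{V}}_{j},\mathbf{\tilde{V}})$, which is independent of the eigenvalues of $\mathbf{H}\mathbf{H}^{H}$, this equals $\E[\Theta(N_{\text{B}},N_{\text{A}},\frac{\alpha\beta\gamma}{N_{\text{B}}}d^{2}(\mathbf{\tilde{V}}_{j},\mathbf{\tilde{V}}))]$, and a second Jensen step (concavity of $\Theta$ in its third argument), the definition (\ref{DD}), and the bound $D\leq\eta$ from (\ref{bbound}) yield the upper bound $\Theta(N_{\text{B}},N_{\text{A}},\frac{\alpha\beta\gamma}{N_{\text{B}}}\eta(N_{\text{A}},N_{\text{B}},2^{B}))$. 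Adding the three estimates gives UB. I expect the leakage step to be the main obstacle: rigorously establishing $\E[\mathbf{W}\mathbf{W}^{H}\mid d^{2}]=(d^{2}/N_{\text{B}})\mathbf{I}_{N_{\text{B}}}$ together with the attendant independence relations, and correctly chaining the two applications of Jensen's inequality; the other two terms reduce to a Wishart expectation and to operator monotonicity of $\log\det$.
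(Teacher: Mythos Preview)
Your proof is correct and follows essentially the same route as the paper: cancel the eavesdropper terms by the unitary invariance of $\mathbf{G}$, lower-bound the subtracted Bob term via $\theta_{\min}$ and $\mathbf{\tilde V}_j\mathbf{\tilde V}_j^{H}+\mathbf{\hat Z}\mathbf{\hat Z}^{H}=\mathbf{I}_{N_{\text A}}$, and control the leakage term through the SVD, independence of singular values and vectors, Jensen, the scaled--identity identity for $\E[\mathbf{\tilde V}^{H}\mathbf{\hat Z}\mathbf{\hat Z}^{H}\mathbf{\tilde V}]$, and finally $D\le \eta$. The only cosmetic difference is that you split the Jensen step in two (first conditioning on $d^{2}$ to get $(d^{2}/N_{\text B})\mathbf I_{N_{\text B}}$, then using concavity of $\Theta$ in its third argument), whereas the paper applies Jensen once over $\mathbf{\tilde V}$ and invokes the lemma $\E_{\mathbf{\tilde V}}[\mathbf{\tilde V}^{H}\mathbf{\hat Z}\mathbf{\hat Z}^{H}\mathbf{\tilde V}]=(D/N_{\text B})\mathbf I_{N_{\text B}}$ from \cite{Jindal08,Shihc11} directly.
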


\begin{proof}
See Appendix B.
\end{proof}

Theorem \ref{Th2} gives a tight upper bound on $\E(\Delta R_{\text{S}})$,
for any number of Alice/Bob/Eve antennas, as well as for any Bob/Eve SNR
regimes. Different from (\ref{uc}), if $\beta \geq 1$, as $B\rightarrow
\infty $,%
\begin{equation}
\text{UB}\rightarrow 0\text{,}
\end{equation}%
which is consistent with Theorem \ref{Th1}.

\begin{example}
Let us apply Theorem \ref{Th2} to the analysis of a RVQ-based AN scheme with
$\beta =1$, $\alpha \gamma =1$, $N_{\text{A}}=4$ and $N_{\text{B}}=2$. The
numerical result in Fig.~1 shows that the proposed upper bound in (\ref%
{MMO}) is much tighter than the heuristic one in (\ref{Old_UB}), and
captures the behavior of $\mathrm{E}(\Delta R_{\text{S}})$.
\end{example}

\begin{figure}[tbp]
\centering\includegraphics[scale=0.55]{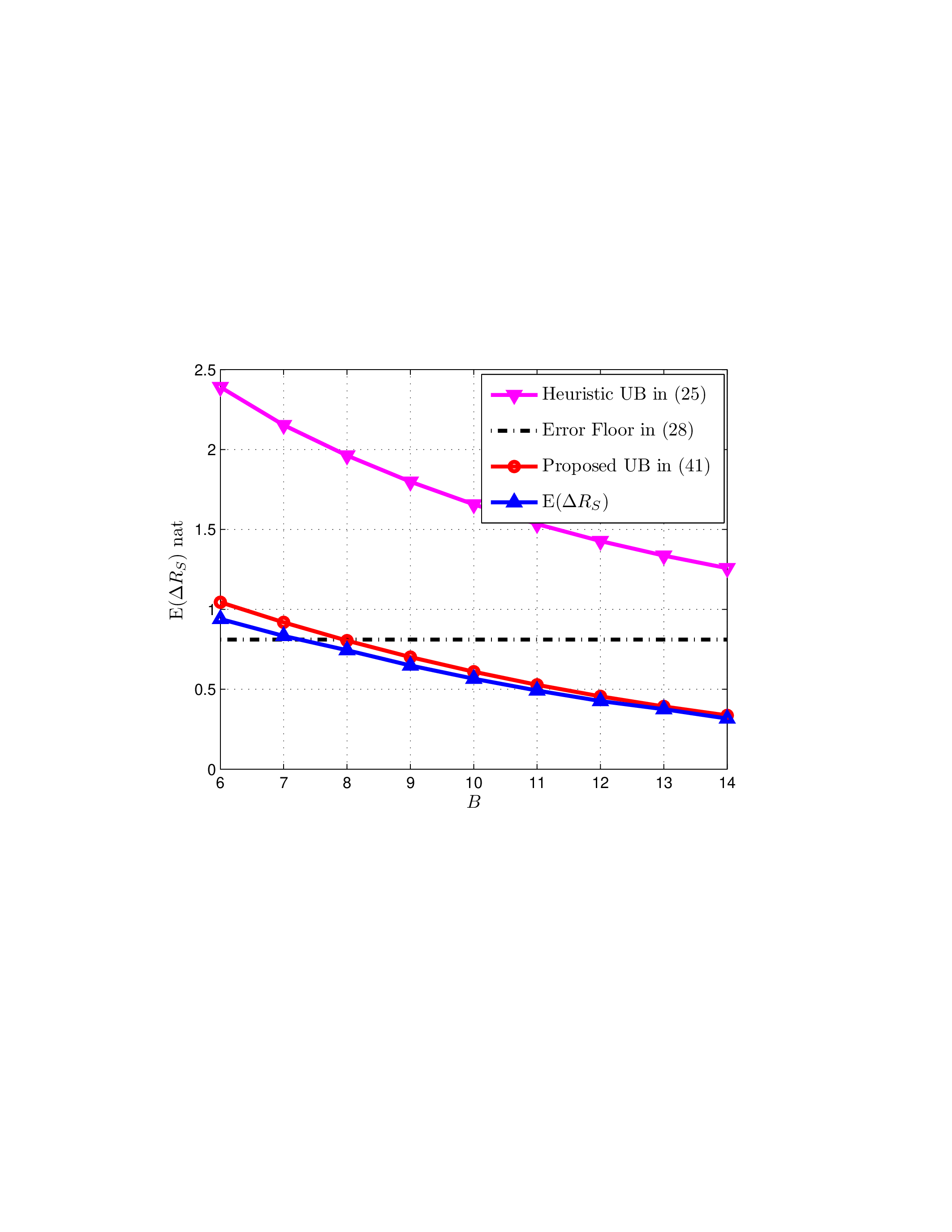} \vspace{-3 mm}
\caption{E$(\Delta R_{\text{S}})$ vs. $B$ with $\protect\beta =1$, $\protect%
\alpha \protect\gamma =1$, $N_{\text{A}}=4$ and $N_{\text{B}}=2$.}
\end{figure}

We then study the distribution of instantaneous secrecy rate loss, defined by%
\begin{equation}
\Delta R_{\text{S}}\triangleq R_{\text{S}}-R_{\text{S,Q}}\text{.}
\end{equation}%
Here, we consider the large system limit as $N_{\text{A}}$ and $B\rightarrow
\infty $ with fixed ratio $B/N_{\text{A}}$. An interesting case that leads
to a closed-form bound can be found when $N_{\text{B}}=N_{\text{E}}=1$.

\begin{theorem}
\label{Th3}If $N_{\text{B}}=N_{\text{E}}=1$, as $N_{\text{A}}$ and $%
B\rightarrow \infty $ with $B/N_{\text{A}}\rightarrow \bar{B}$,%
\begin{eqnarray}
&&\Delta R_{\text{S}}\overset{a.s.}{\rightarrow }\log \left( 1+P/\beta
\right) +\log \left( 1+2^{-\bar{B}}P\right)  \notag \\
&&-\log \left( 1+P+\dfrac{1-\beta }{\beta }(1-2^{-\bar{B}})P\right) \text{.}
\label{app}
\end{eqnarray}
\end{theorem}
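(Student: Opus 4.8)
The plan is to specialise the exact secrecy-rate formulas (\ref{SR_P_CSI})--(\ref{SR_Q_CSI}) to $N_{\text{B}}=N_{\text{E}}=1$, rewrite $\Delta R_{\text{S}}$ as a fixed continuous function of a handful of scalar random quantities, and then drive each of those quantities to its almost-sure limit in the coupled regime $N_{\text{A}},B\to\infty$ with $B/N_{\text{A}}\to\bar{B}$.

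\emph{Step 1 (scalar reduction).} When $N_{\text{B}}=N_{\text{E}}=1$ every determinant becomes a scalar. The SVD structure gives $\mathbf{\tilde{V}}=\mathbf{H}^{H}/\|\mathbf{H}\|$, so $(\mathbf{H\tilde{V}})(\mathbf{H\tilde{V}})^{H}=\|\mathbf{H}\|^{2}$; writing $\delta\triangleq 1-|\mathbf{\tilde{V}}^{H}\mathbf{\tilde{V}}_{j}|^{2}=\min_{i\le 2^{B}}\big(1-|\mathbf{\tilde{V}}^{H}\mathbf{\tilde{V}}_{i}|^{2}\big)$ for the squared chordal quantisation error and using that $[\mathbf{\tilde{V}}_{j},\mathbf{\hat{Z}}]$ is unitary, $|\mathbf{H\tilde{V}}_{j}|^{2}=\|\mathbf{H}\|^{2}(1-\delta)$ and $\|\mathbf{H\hat{Z}}\|^{2}=\|\mathbf{H}\|^{2}\delta$. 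Since $\mathbf{G}$ is independent of $\mathbf{H}$ and of the codebook $\mathcal{V}$, conditioning on $(\mathbf{H},\mathcal{V})$ and using rotation invariance of the complex Gaussian law gives $|\mathbf{G\tilde{V}}|^{2},|\mathbf{G\tilde{V}}_{j}|^{2}\sim\mathrm{Exp}(1)$, with $\|\mathbf{GZ}\|^{2}=\|\mathbf{G}\|^{2}-|\mathbf{G\tilde{V}}|^{2}$ and $\|\mathbf{G\hat{Z}}\|^{2}=\|\mathbf{G}\|^{2}-|\mathbf{G\tilde{V}}_{j}|^{2}$. Collecting the logarithms in (\ref{SR_P_CSI})--(\ref{SR_Q_CSI}) then yields
\begin{align}
\Delta R_{\text{S}}&=\log\big(1+\alpha\gamma\|\mathbf{H}\|^{2}\big)-\log\frac{1+\alpha\gamma\|\mathbf{H}\|^{2}\big(1-(1-\beta)\delta\big)}{1+\alpha\beta\gamma\|\mathbf{H}\|^{2}\delta}\notag\\
&\quad-\log\!\bigg(1+\frac{\alpha|\mathbf{G\tilde{V}}|^{2}}{1+\alpha\beta\|\mathbf{GZ}\|^{2}}\bigg)+\log\!\bigg(1+\frac{\alpha|\mathbf{G\tilde{V}}_{j}|^{2}}{1+\alpha\beta\|\mathbf{G\hat{Z}}\|^{2}}\bigg).\notag
\end{align}

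\emph{Step 2 (almost-sure limits of the ingredients).} Holding the total power $P=P_{\text{u}}+P_{\text{v}}$ and the ratios $\beta,\gamma$ fixed forces $\alpha\gamma=P/(1+\beta(N_{\text{A}}-1))$, hence $\alpha\gamma N_{\text{A}}\to P/\beta$, $\alpha\beta\gamma N_{\text{A}}\to P$, while $\alpha=\alpha\gamma/\gamma\to 0$ with $1/\alpha$ growing linearly in $N_{\text{A}}$. The strong law of large numbers gives $\|\mathbf{H}\|^{2}/N_{\text{A}}\to 1$ a.s., so $\alpha\gamma\|\mathbf{H}\|^{2}\to P/\beta$ and $\alpha\beta\gamma\|\mathbf{H}\|^{2}\to P$ a.s. Since $|\mathbf{G\tilde{V}}|^{2},|\mathbf{G\tilde{V}}_{j}|^{2}$ are $\mathrm{Exp}(1)$, $\Pr(\alpha|\mathbf{G\tilde{V}}|^{2}>\epsilon)=e^{-\epsilon/\alpha}$ is summable in $N_{\text{A}}$, so Borel--Cantelli gives $\alpha|\mathbf{G\tilde{V}}|^{2}\to 0$ and $\alpha|\mathbf{G\tilde{V}}_{j}|^{2}\to 0$ a.s.; as the denominators are $\ge 1$, both Eve-side logarithms in the display tend to $0$ a.s. The key remaining fact is $\delta\xrightarrow{a.s.}2^{-\bar{B}}$: conditionally on $\mathbf{\tilde{V}}$ the $Y_{i}=1-|\mathbf{\tilde{V}}^{H}\mathbf{\tilde{V}}_{i}|^{2}$ are i.i.d.\ $\mathrm{Beta}(N_{\text{A}}-1,1)$, so $\Pr(\delta>t)=(1-t^{N_{\text{A}}-1})^{2^{B}}$; putting $t=s\,2^{-B/(N_{\text{A}}-1)}$ and using $(1-x)^{n}\le e^{-nx}$ (for $s>1$) and $1-(1-x)^{n}\le nx$ (for $s<1$) yields $\Pr(\delta>s\,2^{-B/(N_{\text{A}}-1)})\le e^{-s^{N_{\text{A}}-1}}$ and $\Pr(\delta<s\,2^{-B/(N_{\text{A}}-1)})\le s^{N_{\text{A}}-1}$, both summable in $N_{\text{A}}$; since $2^{-B/(N_{\text{A}}-1)}\to 2^{-\bar{B}}$, Borel--Cantelli gives the claim. (Convergence of the mean $D(N_{\text{A}},1,2^{B})\to 2^{-\bar{B}}$ also drops out of (\ref{bbound}), since $c(N_{\text{A}},1)=1$.)

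\emph{Step 3 (assembly) and the main obstacle.} All these convergences can be arranged on a single probability-one event, on which the argument of every logarithm lies in a fixed compact subset of $(0,\infty)$, so continuity lets us substitute the limits: the two Eve-side terms vanish, $\alpha\gamma\|\mathbf{H}\|^{2}\to P/\beta$, $\alpha\beta\gamma\|\mathbf{H}\|^{2}\delta\to P\,2^{-\bar{B}}$ and $\alpha\gamma\|\mathbf{H}\|^{2}(1-(1-\beta)\delta)\to\frac{P}{\beta}\big(1-(1-\beta)2^{-\bar{B}}\big)$, giving $\Delta R_{\text{S}}\xrightarrow{a.s.}\log(1+\tfrac{P}{\beta})+\log(1+2^{-\bar{B}}P)-\log\big(1+\tfrac{P}{\beta}(1-(1-\beta)2^{-\bar{B}})\big)$; the elementary identity $\tfrac{P}{\beta}\big(1-(1-\beta)2^{-\bar{B}}\big)=P+\tfrac{1-\beta}{\beta}(1-2^{-\bar{B}})P$ turns the last logarithm into the one in (\ref{app}). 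The substantive difficulty is precisely the limit $\delta\to 2^{-\bar{B}}$: the bounds (\ref{bbound}) control only the mean $\mathrm{E}[\delta]$, so upgrading to almost-sure convergence of the random variable in the regime $B/N_{\text{A}}\to\bar{B}$ genuinely needs the order-statistic tail estimate above (or an equivalent second-moment bound); one must also be careful that the joint limit is taken on one full-measure event and that $P$ is the normalisation under which the limiting expression stays finite.
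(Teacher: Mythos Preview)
Your argument is correct and follows essentially the same skeleton as the paper's proof: reduce the determinants to scalars, identify the almost-sure limits of the scaled channel norms and of the quantisation error, and substitute by continuity. The two places where your treatment differs are minor but worth noting. First, the paper obtains $(\mathbf{H\tilde{V}}_{j})(\mathbf{H\tilde{V}}_{j})^{H}/\mathbf{HH}^{H}\xrightarrow{a.s.}1-2^{-\bar B}$ by citing \cite{Santipach09}, whereas you reprove this from the $\mathrm{Beta}(N_{\text{A}}-1,1)$ order-statistic tail and Borel--Cantelli; your version is self-contained and makes the a.s.\ claim explicit. Second, for Eve's contribution the paper shows that both the perfect-CSI and quantised-CSI terms converge to the \emph{same} finite limit (via $\alpha\|\mathbf{GZ}\|^{2},\alpha\|\mathbf{G\hat Z}\|^{2},\alpha\|\mathbf{G}\|^{2}\xrightarrow{a.s.}P/(\beta\gamma)$) and hence cancel, while you instead show each Eve-side logarithm vanishes because the single-coordinate projections $\alpha|\mathbf{G\tilde V}|^{2}$ and $\alpha|\mathbf{G\tilde V}_{j}|^{2}$ tend to $0$; both routes are valid and lead to the same conclusion.
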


\begin{proof}
See Appendix C.
\end{proof}

Theorem \ref{Th3} provides a closed-form asymptotic expression for $\Delta
R_{\text{S}}$ when $N_{\text{B}}=N_{\text{E}}=1$. Hence, the ergodic secrecy
rate loss also converges to the same constant, as stated in the following
corollary.

\begin{corollary}
Under the same assumptions of Theorem \ref{Th3},%
\begin{eqnarray}
&&\mathrm{E}(\Delta R_{\text{S}})\rightarrow \log \left( 1+P/\beta \right)
+\log \left( 1+2^{-\bar{B}}P\right)  \notag \\
&&-\log \left( 1+P+\dfrac{1-\beta }{\beta }(1-2^{-\bar{B}})P\right) \text{.}
\label{app1}
\end{eqnarray}
\end{corollary}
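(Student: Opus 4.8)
The plan is to derive the corollary from Theorem~\ref{Th3} by upgrading the almost-sure convergence of $\Delta R_{\text{S}}$ to convergence in mean. Let $L$ denote the deterministic right-hand side of \eqref{app}; Theorem~\ref{Th3} gives $\Delta R_{\text{S}}\overset{a.s.}{\rightarrow}L$ along the sequence $N_{\text{A}},B\to\infty$ with $B/N_{\text{A}}\to\bar B$. Since almost-sure convergence together with uniform integrability implies $L^{1}$-convergence, and in particular convergence of expectations, it suffices to show that the family $\{\Delta R_{\text{S}}\}$, indexed by $N_{\text{A}}$ along this sequence, is uniformly integrable; I would establish this through the uniform second-moment bound $\sup_{N_{\text{A}}}\mathrm{E}\!\left[(\Delta R_{\text{S}})^{2}\right]<\infty$, which is enough (e.g.\ by de la Vall\'ee--Poussin).

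For the moment bound, specialize \eqref{SR_P_CSI} and \eqref{SR_Q_CSI} to $N_{\text{B}}=N_{\text{E}}=1$, so that every log-determinant collapses to a scalar $\log(1+x)$ with $x\ge0$. Bounding the additive log-terms by $\log(1+x)\le x$, and the subtracted ones (kept in the ratio form in which they appear in \eqref{SR_Q_CSI}, and grouping the last two terms of \eqref{SR_P_CSI} the same way) by $\log\frac{1+s}{1+s+t}\ge-\log(1+t)\ge-t$ for $s,t\ge0$, one finds that $|R_{\text{S}}|$ and $|R_{\text{S,Q}}|$ are each dominated by a fixed linear combination of the scalar quadratic forms $\alpha\gamma\|\mathbf{H}\|^{2}$, $\alpha\gamma|\mathbf{H}\tilde{\mathbf{V}}_{j}|^{2}$, $\alpha|\mathbf{G}\tilde{\mathbf{V}}|^{2}$, $\alpha|\mathbf{G}\tilde{\mathbf{V}}_{j}|^{2}$. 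Using Cauchy--Schwarz, $|\mathbf{H}\tilde{\mathbf{V}}_{j}|^{2}\le\|\mathbf{H}\|^{2}$, this gives $|\Delta R_{\text{S}}|\le2\alpha\gamma\|\mathbf{H}\|^{2}+\alpha\big(|\mathbf{G}\tilde{\mathbf{V}}|^{2}+|\mathbf{G}\tilde{\mathbf{V}}_{j}|^{2}\big)$. Here $\|\mathbf{H}\|^{2}$ is a sum of $N_{\text{A}}$ i.i.d.\ unit-mean exponentials, so $\mathrm{E}\|\mathbf{H}\|^{4}=N_{\text{A}}(N_{\text{A}}+1)$; and since $\tilde{\mathbf{V}}_{j}$ (like $\tilde{\mathbf{V}}$) is a function of $\mathbf{H}$ and the codebook only and is independent of $\mathbf{G}$, the scalars $\mathbf{G}\tilde{\mathbf{V}}$ and $\mathbf{G}\tilde{\mathbf{V}}_{j}$ are $\mathcal{N}_{\mathbb{C}}(0,1)$, hence have fourth moment $2$. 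Finally, in the power-normalized regime of Theorem~\ref{Th3} the constraint $P=P_{\text{u}}+P_{\text{v}}=\alpha\gamma\big(1+\beta(N_{\text{A}}-1)\big)$ forces $\alpha\gamma=\Theta(1/N_{\text{A}})$ (and similarly $\alpha=\Theta(1/N_{\text{A}})$), so $(\alpha\gamma)^{2}\mathrm{E}\|\mathbf{H}\|^{4}=\Theta(1)$ and $\alpha^{2}=\Theta(1/N_{\text{A}}^{2})$, yielding $\sup_{N_{\text{A}}}\mathrm{E}[(\Delta R_{\text{S}})^{2}]<\infty$.

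The step needing real care is this uniformity in $N_{\text{A}}$: the individual log-determinants in \eqref{SR_P_CSI}--\eqref{SR_Q_CSI} grow like $\log N_{\text{A}}$, so a term-by-term bound is worthless; one must exploit the cancellations, namely that every surviving quadratic form appears multiplied by one of the vanishing coefficients $\alpha,\alpha\gamma,\alpha\beta\gamma$ — exactly the bookkeeping already carried out in the proof of Theorem~\ref{Th3}, which can be reused verbatim here. Once the uniform second moment is in hand, uniform integrability and hence the corollary follow immediately. An alternative route, avoiding uniform integrability, would apply Fatou/Scheff\'e arguments to $\mathrm{E}(R_{\text{S}})$ and $\mathrm{E}(R_{\text{S,Q}})$ separately using their individual almost-sure limits, but that requires essentially the same moment estimates, so the uniform-integrability argument above is the most self-contained and is what I would write up.
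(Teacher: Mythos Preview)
Your proposal is correct and, in fact, considerably more careful than the paper's own proof, which consists of the single sentence ``The proof is straightforward.'' The authors evidently regard the passage from the almost-sure statement of Theorem~\ref{Th3} to convergence of expectations as immediate, without addressing the integrability issue. You correctly identify that almost-sure convergence alone does not yield convergence of the mean, and your uniform second-moment bound --- exploiting that in this regime the power constraint forces $\alpha\gamma=\Theta(1/N_{\text{A}})$, so that $(\alpha\gamma)^{2}\mathrm{E}\|\mathbf{H}\|^{4}=\Theta(1)$ while the $\mathbf{G}$-terms vanish --- is a clean and valid way to close that gap. Your observation that term-by-term bounds on the individual log-determinants would diverge, and that one must group them into the ratio forms to extract the $O(\alpha)$ or $O(\alpha\gamma)$ prefactors, is exactly the right diagnosis. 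In short, the paper takes the corollary as a formality; your write-up supplies the missing uniform-integrability argument that makes the deduction rigorous.
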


\begin{proof}
The proof is straightforward.
\end{proof}

\begin{example}
The numerical result in Fig.~2 shows that (\ref{app1}) is very
accurate even for finite $N_{\text{A}}$ and $B$.
\begin{figure}[tbp]
\centering\includegraphics[scale=0.55]{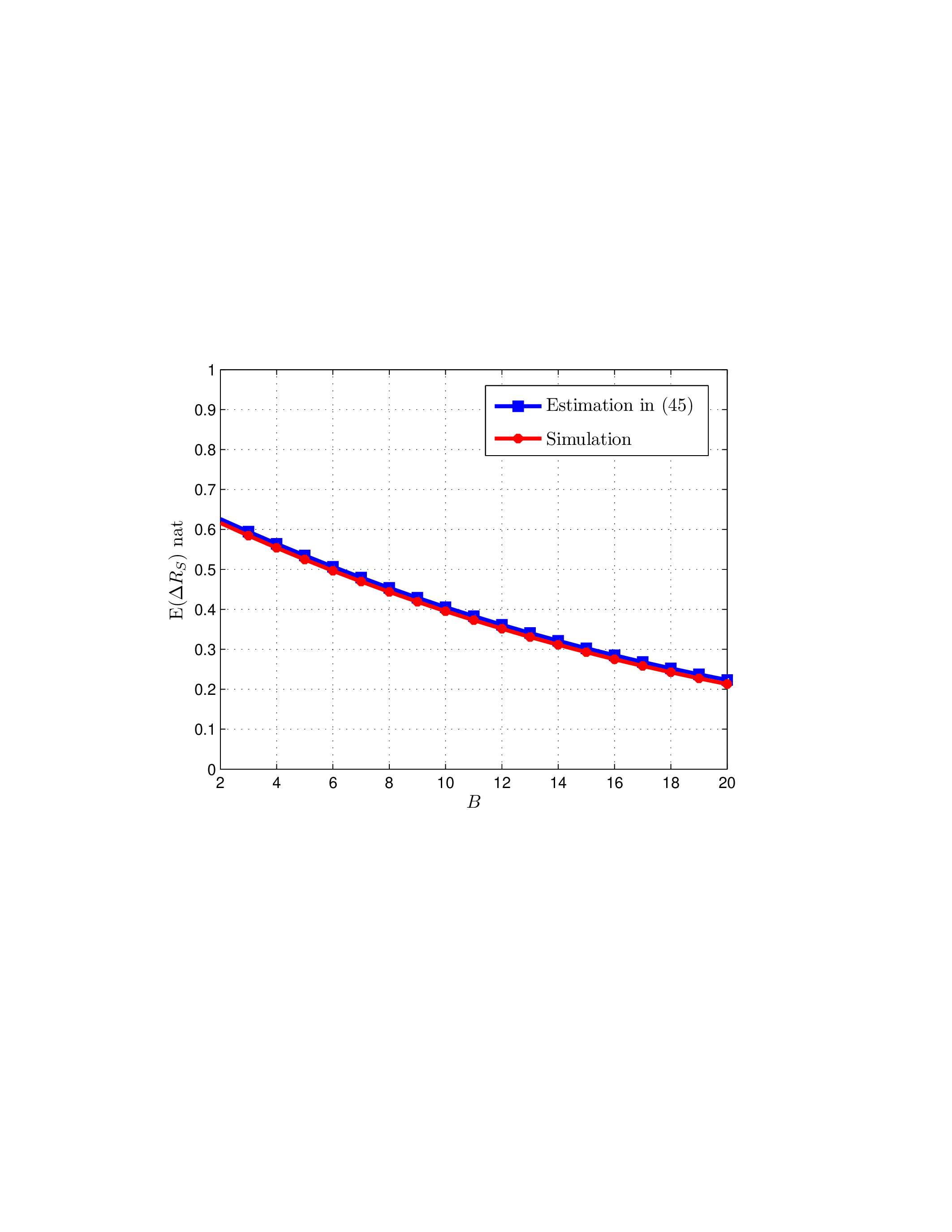}\vspace{-3
mm}
\caption{E$(\Delta R_{\text{S}})$ vs. $B$ with $\protect\beta =1$, $P=1$, $%
N_{\text{A}}=10$ and $N_{\text{B}}=1$.}
\end{figure}
\end{example}

\subsection{A Lower Bound on Ergodic Secrecy Capacity}

A lower bound on $\E(C_{\text{S,Q}})$ can be derived using the results from (%
\ref{Bound_M}), Theorem \ref{Th2} and \cite[Th. 1]{shuiyin14_2}.

\begin{theorem}
\label{Th4}%
\begin{eqnarray}
&&\E(C_{\text{S,Q}})\geq \Theta (N_{\min },N_{\max },\alpha \beta )-\Omega
+\Theta (N_{\text{B}},N_{\text{A}},\theta _{\min })  \notag \\
&&-\Theta \left( N_{\text{B}},N_{\text{A}},\alpha \beta \gamma \dfrac{\eta
\left( N_{\text{A}},N_{\text{B}},2^{B}\right) }{N_{\text{B}}}\right)
\triangleq \bar{C}_{\text{LB,Q}}\text{,}  \label{C_LB_F}
\end{eqnarray}%
where $\Theta (\cdot ,\cdot ,\cdot )$ is given in (\ref{Cap_CF}), $\eta
(\cdot ,\cdot ,\cdot )$ is given in (\ref{niu}) and%
\begin{equation}
\Omega =\left\{
\begin{array}{l}
K\sum\limits_{k=1}^{\hat{N}_{\min }}\det \left( \mathbf{R}^{(k)}\right)
\text{, }~~~~~~\beta \neq 1 \\
\Theta (\hat{N}_{\min },\hat{N}_{\max },\alpha )\text{, }~~~~~~~\beta =1%
\end{array}%
\right.  \label{PPP}
\end{equation}%
\begin{equation}
K=\dfrac{(-1)^{N_{\text{E}}(N_{\text{A}}-\hat{N}_{\min })}}{\Gamma _{\hat{N}%
_{\min }}(N_{\text{E}})}\dfrac{\prod\limits_{i=1}^{2}\mu _{i}^{m_{i}N_{\text{%
E}}}}{\prod\limits_{i=1}^{2}\Gamma _{m_{i}}(m_{i})\prod\limits_{i<j}\left(
\mu _{i}-\mu _{j}\right) ^{m_{i}m_{j}}}\text{,}  \label{K}
\end{equation}%
\begin{equation*}
\Gamma _{k}(n)=\prod\limits_{i=1}^{k}(n-i)!\text{,}
\end{equation*}%
and $\mu _{1}>\mu _{2}$ are the two distinct eigenvalues of the matrix $%
\mathrm{diag}\left( \left\{ \theta _{i}^{-1}\right\} _{1}^{N_{\text{A}%
}}\right) $, with corresponding multiplicities $m_{1}$ and $m_{2}$ such that
$m_{1}+m_{2}=N_{\text{A}}$. The matrix $\mathbf{R}^{(k)}$ has elements%
\begingroup\renewcommand*{\arraystretch}{2.5}
\begin{equation}
r_{i,j}^{(k)}=\left\{ \!\!\!\!%
\begin{array}{l}
\displaystyle(\mu _{\displaystyle e_{i}})^{N_{\text{A}}-j-d_{i}}\dfrac{%
\displaystyle\left( \displaystyle N_{\text{A}}-j\right) !}{(N_{\text{A}%
}-j-d_{i})!}\text{, }~~~~~\hat{N}_{\min }+1\leq j\leq N_{\text{A}} \\
\displaystyle(-1)^{\displaystyle d_{i}}\dfrac{\displaystyle\varphi (i,j)!}{%
\displaystyle(\mu _{\displaystyle e_{i}})^{\displaystyle\varphi (i,j)+1}}%
\text{,}~~~~~~~~~~~~1\leq j\leq \hat{N}_{\min }\text{, }j\neq k \\
\displaystyle(-1)^{\displaystyle d_{i}}\varphi (i,j)!e^{\displaystyle\mu _{%
\displaystyle e_{i}}}\sum\limits_{l=0}^{\displaystyle\varphi (i,j)}\dfrac{%
\displaystyle\Gamma (l-\varphi (i,j),\mu _{\displaystyle e_{i}})}{%
\displaystyle(\mu _{\displaystyle e_{i}})^{\displaystyle l+1}}\text{, }\text{%
otherwise}%
\end{array}%
\right.  \label{R}
\end{equation}%
\endgroup where%
\begin{equation*}
e_{i}=\QATOPD\{ . {1~~~~~~~~~~1\leq i\leq m_{1}}{2~~~m_{1}+1\leq i\leq N_{%
\text{A}}}
\end{equation*}%
\begin{equation*}
d_{i}=\sum_{k=1}^{e_{i}}m_{k}-i\text{,}
\end{equation*}%
\begin{equation*}
\varphi (i,j)=N_{\text{E}}-\hat{N}_{\min }+j-1+d_{i}\text{.}
\end{equation*}
\end{theorem}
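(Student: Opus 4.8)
The plan is to stitch together exactly the three ingredients named just before the statement. By the chain of inequalities in (\ref{Bound_M}) we already have $\E(C_{\text{S,Q}})\ge \E(R_{\text{S,Q}})=\E(R_{\text{S}})-\E(\Delta R_{\text{S}})$, so it suffices to substitute an exact expression for $\E(R_{\text{S}})$ and the upper bound on the rate loss. Theorem \ref{Th2} supplies the latter verbatim, $\E(\Delta R_{\text{S}})\le\text{UB}$ with UB as in (\ref{MMO}); hence $\E(C_{\text{S,Q}})\ge \E(R_{\text{S}})-\text{UB}$, and the entire argument collapses to inserting the closed form of $\E(R_{\text{S}})$ and simplifying.

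Next I would invoke \cite[Th.~1]{shuiyin14_2}, which evaluates the expectation over $\mathbf{H}$ and $\mathbf{G}$ of the three $\log\det$ terms in (\ref{SR_P_CSI}) and yields $\E(R_{\text{S}})=\Theta(N_{\text{B}},N_{\text{A}},\alpha\gamma)+\Theta(N_{\min},N_{\max},\alpha\beta)-\Omega$ with $\Omega$ as in (\ref{PPP}). The first term is $\Theta(N_{\text{B}},N_{\text{A}},\alpha\gamma)$ because $\mathbf{H}\mathbf{H}^{H}\sim W_{N_{\text{B}}}(N_{\text{A}},\mathbf{I})$; the second is $\Theta(N_{\min},N_{\max},\alpha\beta)$ because, conditioned on $\mathbf{H}$, the matrix $\mathbf{G}\mathbf{Z}$ still has i.i.d.\ $\mathcal{N}_{\mathbb{C}}(0,1)$ entries by rotational invariance of the complex Gaussian under the isometry $\mathbf{Z}$, together with the identity $|\mathbf{I}_{p}+\mathbf{A}\mathbf{B}|=|\mathbf{I}_{q}+\mathbf{B}\mathbf{A}|$ used to pass from the $N_{\text{E}}\times N_{\text{E}}$ form to an $N_{\min}\times N_{\min}$ one; and the third, combined term is the ergodic mutual information of an $N_{\text{E}}\times N_{\text{A}}$ i.i.d.\ Gaussian channel with transmit covariance $\mathrm{diag}(\{\theta_{i}\})$ from (\ref{theta}), which is precisely $\Omega$: a single $\Theta(\hat{N}_{\min},\hat{N}_{\max},\alpha)$ when $\beta=1$ (the covariance collapses to $\alpha^{-1}\mathbf{I}_{N_{\text{A}}}$), and the determinant sum $K\sum_{k}\det(\mathbf{R}^{(k)})$ obtained from the eigenvalue density of the correlated Wishart matrix with the two-point spectrum $\{\alpha^{-1},(\alpha\beta)^{-1}\}$ when $\beta\neq 1$.

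Finally, substituting this identity together with (\ref{MMO}) into $\E(C_{\text{S,Q}})\ge\E(R_{\text{S}})-\text{UB}$, the $\Theta(N_{\text{B}},N_{\text{A}},\alpha\gamma)$ produced by $\E(R_{\text{S}})$ cancels the leading term of UB, the $-\Theta(N_{\text{B}},N_{\text{A}},\theta_{\min})$ in UB flips sign and survives, and the $\Theta(N_{\text{B}},N_{\text{A}},\alpha\beta\gamma\,\eta(N_{\text{A}},N_{\text{B}},2^{B})/N_{\text{B}})$ term is carried over with a minus sign, leaving exactly $\bar{C}_{\text{LB,Q}}$ of (\ref{C_LB_F}). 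The only point requiring care is purely notational: one must check that the sizes $N_{\text{E}}\times(N_{\text{A}}-N_{\text{B}})$ of $\mathbf{G}\mathbf{Z}$ and $N_{\text{E}}\times N_{\text{A}}$ of $\mathbf{G}\hat{\mathbf{V}}$ feed the correct min/max pairs $N_{\min},N_{\max}$ and $\hat{N}_{\min},\hat{N}_{\max}$ into the respective $\Theta$'s and into the definition of $\Omega$, and that the multiplicities $m_{1},m_{2}$ in (\ref{K})–(\ref{R}) are the ones induced by the power profile $\theta_{i}$. Once that bookkeeping is settled the proof is a one-line substitution with no further estimation, since the heaviest step — the correlated-Wishart integral yielding $\Omega$ — is already packaged inside \cite[Th.~1]{shuiyin14_2} and need not be reproved here.
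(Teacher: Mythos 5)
Your proposal is correct and follows exactly the paper's own proof: substitute the closed form $\E(R_{\text{S}})=\Theta (N_{\text{B}},N_{\text{A}},\alpha \gamma )+\Theta (N_{\min },N_{\max },\alpha \beta )-\Omega$ from \cite[Th.~1]{shuiyin14_2} and the upper bound (\ref{MMO}) into (\ref{Bound_M}), and observe the cancellation of $\Theta (N_{\text{B}},N_{\text{A}},\alpha \gamma )$. The extra paragraph re-deriving the cited closed form is unnecessary (the paper simply cites it), and contains one harmless slip --- at $\beta =1$ the transmit covariance collapses to $\alpha \mathbf{I}_{N_{\text{A}}}$, not $\alpha ^{-1}\mathbf{I}_{N_{\text{A}}}$, consistent with the argument $\alpha$ appearing in $\Theta (\hat{N}_{\min },\hat{N}_{\max },\alpha )$ --- but this does not affect the validity of the proof.
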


\begin{proof}
See Appendix D.
\end{proof}

Theorem \ref{Th4} gives a lower bound on $\E(C_{\text{S,Q}})$, for any
number of Alice/Bob/Eve antennas, as well as for any Bob/Eve SNR regimes.
The lower bound in (\ref{C_LB_F}) is an increasing function of the number of
feedback bits $B$. To guarantee a positive secrecy capacity, Alice just
needs to increase $B$ and checks whether $\bar{C}_{\text{LB,Q}}>0$.

\subsection{Positive Secrecy Capacity with Quantized CSI}

To characterize the achievability of positive secrecy capacity with
quantized CSI, we start by analyzing the tightness of (\ref{C_LB_F}).

\begin{theorem}
\label{Th5}If $N_{\text{E}}\leq N_{\text{A}}-N_{\text{B}}$ and $\beta \geq 1$%
, as $\alpha \beta $, $B\rightarrow \infty $,%
\begin{equation}
\bar{C}_{\text{LB,Q}}=\E(C_{\text{S,Q}})=\E(C_{\text{S}})=\bar{C}_{\text{Bob}%
}\text{,}  \label{t3}
\end{equation}%
where $\bar{C}_{\text{Bob}}$ represents Bob's ergodic channel capacity.
\end{theorem}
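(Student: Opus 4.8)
The plan is to prove the chain of equalities in \eqref{t3} by combining Theorem~\ref{Th4} (the lower bound $\bar{C}_{\text{LB,Q}}$), the trivial sandwich $\E(C_{\text{S,Q}})\leq\E(C_{\text{S}})\leq\bar{C}_{\text{Bob}}$, and an asymptotic evaluation of $\bar{C}_{\text{LB,Q}}$ under the hypotheses $N_{\text{E}}\leq N_{\text{A}}-N_{\text{B}}$ and $\beta\geq 1$ as $\alpha\beta,B\rightarrow\infty$. First I would record the two easy inequalities: since the AN scheme with quantized CSI is a particular coding strategy, $\E(R_{\text{S,Q}})\leq\E(C_{\text{S,Q}})\leq\E(C_{\text{S}})$, and since removing Eve only helps, $\E(C_{\text{S}})\leq\bar{C}_{\text{Bob}}$ where $\bar{C}_{\text{Bob}}=\Theta(N_{\text{B}},N_{\text{A}},\alpha\gamma)$ is the closed-form ergodic capacity of the main channel (this identification, and the $\Theta$-form, should already be available from \cite{shin03,shuiyin14_2}). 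Hence it suffices to show $\bar{C}_{\text{LB,Q}}\rightarrow\bar{C}_{\text{Bob}}$ in the stated limit; the sandwich then forces all four quantities to coincide.

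Next I would take each of the four $\Theta$-terms in \eqref{C_LB_F} and examine its limit. The first term $\Theta(N_{\min},N_{\max},\alpha\beta)$ and the term $\Omega$ together form (up to sign) the contribution of Eve; under $N_{\text{E}}\leq N_{\text{A}}-N_{\text{B}}$ we have $N_{\min}=N_{\text{E}}$, $\hat N_{\min}=N_{\text{E}}$, and I expect that as $\alpha\beta\rightarrow\infty$ the artificial noise drowns Eve entirely, so that $\Theta(N_{\min},N_{\max},\alpha\beta)-\Omega\rightarrow 0$. When $\beta=1$ this is the statement $\Theta(N_{\text{E}},N_{\max},\alpha)-\Theta(N_{\text{E}},\hat N_{\max},\alpha)\rightarrow 0$, which holds because the difference $N_{\max}-N_{\max}$ versus $\hat N_{\max}=N_{\text{A}}$ washes out as the argument blows up (the relevant degrees-of-freedom dimension is $N_{\text{E}}$ in both, and the mutual-information gap between Eve-sees-signal-plus-AN and Eve-sees-AN-only vanishes); when $\beta\neq 1$ one argues analogously using that $\mu_1,\mu_2\to 0$ and the determinant formula in \eqref{R} collapses. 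For the $B$-dependent pair, by \eqref{bbound} we have $\eta(N_{\text{A}},N_{\text{B}},2^B)\rightarrow 0$ as $B\rightarrow\infty$ \cite{Dai08}, so the argument $\alpha\beta\gamma\,\eta/N_{\text{B}}$ of the last $\Theta$ tends to $0$, and since $\Theta(m,n,x)\rightarrow 0$ as $x\rightarrow 0^{+}$ (every term carries a factor $e^{-1/x}$ or $x^{-j}\Gamma(-j,1/x)\sim x e^{-1/x}$), this term vanishes. Finally, using $\beta\geq 1$ we get $\theta_{\min}=\min\{\alpha\gamma,\alpha\beta\gamma\}=\alpha\gamma$, so $\Theta(N_{\text{B}},N_{\text{A}},\theta_{\min})=\Theta(N_{\text{B}},N_{\text{A}},\alpha\gamma)=\bar{C}_{\text{Bob}}$ exactly. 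Summing the four limits gives $\bar{C}_{\text{LB,Q}}\rightarrow 0-0+\bar{C}_{\text{Bob}}-0=\bar{C}_{\text{Bob}}$, completing the argument via the sandwich.

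The main obstacle will be the careful asymptotic handling of the Eve term $\Theta(N_{\min},N_{\max},\alpha\beta)-\Omega$ in the case $\beta\neq 1$: here $\Omega=K\sum_{k}\det(\mathbf{R}^{(k)})$ is a genuinely messy determinantal expression built from confluent-hypergeometric-type integrals, the eigenvalues $\mu_1=(\alpha\beta)^{-1}$-scaled and $\mu_2=\alpha^{-1}$-scaled both go to zero but at different rates, and one must verify that the leading-order divergence of $\Theta(N_{\min},N_{\max},\alpha\beta)$ is matched exactly by $\Omega$ rather than merely to within $o(1)$. I would approach this by recognizing that $\Theta(N_{\min},N_{\max},\alpha\beta)$ is $\E\log\det(\mathbf{I}+\alpha\beta\,(\mathbf{G}\mathbf{Z})(\mathbf{G}\mathbf{Z})^{H})$ and $\Omega$ is $\E\log\det(\mathbf{I}+\alpha(\mathbf{G}\tilde{\mathbf{V}})(\mathbf{G}\tilde{\mathbf{V}})^{H}+\alpha\beta(\mathbf{G}\mathbf{Z})(\mathbf{G}\mathbf{Z})^{H})$ (the two terms that appear with opposite signs in $\E(R_{\text{S}})$ according to \eqref{SR_P_CSI}), so their difference is exactly $-\big(\text{Eve's equivocation-loss term}\big)$, whose limit as $\alpha\beta\rightarrow\infty$ is zero by the argument already used in \cite[Th.~3]{shuiyin14_2} — thus I would cite that convergence rather than manipulate \eqref{R} directly. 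With that identification the whole proof reduces to bookkeeping of which $\Theta$-arguments survive, and the claimed fourfold equality follows.
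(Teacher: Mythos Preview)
Your proposal is correct and follows essentially the same route as the paper: use the sandwich $\bar{C}_{\text{LB,Q}}\leq\E(C_{\text{S,Q}})\leq\E(C_{\text{S}})\leq\bar{C}_{\text{Bob}}$, note that $\beta\geq 1$ gives $\theta_{\min}=\alpha\gamma$, let $B\rightarrow\infty$ so that $\eta\rightarrow 0$ kills the last $\Theta$-term, and invoke \cite[Th.~3]{shuiyin14_2} for the Eve contribution. The paper is slightly more economical in that it recognizes the three surviving terms of $\bar{C}_{\text{LB,Q}}$ after $B\rightarrow\infty$ as exactly $\E(R_{\text{S}})$ (via \eqref{L1}) and then cites \cite[Th.~3]{shuiyin14_2} once for the whole convergence $\E(R_{\text{S}})=\E(C_{\text{S}})=\bar{C}_{\text{Bob}}$, rather than isolating $\Theta(N_{\min},N_{\max},\alpha\beta)-\Omega$ separately; your detour through the explicit $\beta=1$ and $\beta\neq1$ asymptotics is unnecessary (and your final paragraph rightly abandons it).
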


\begin{proof}
See Appendix E.
\end{proof}

We have shown that $\bar{C}_{\text{LB,Q}}$, $\E(C_{\text{S}})$ and $\bar{C}_{%
\text{Bob}}$ coincide asymptotically as $B$ and $P_{\text{v}}=\alpha \beta
\gamma (N_{\text{A}}-N_{\text{B}})$ go to infinity. We remark that according
to (\ref{CS_ave}), a universal upper bound on the ergodic secrecy capacity
is given by%
\begin{equation}
\E(C_{\text{S}})\leq \max_{p\left( \mathbf{u}\right) }\left\{ I(\mathbf{%
u;z|H)}\right\} =\bar{C}_{\text{Bob}}\text{.}  \label{t31}
\end{equation}

\begin{remark}
Note that $\bar{C}_{\text{Bob}}>0$ and $\bar{C}_{\text{LB,Q}}$ is derived
based on Gaussian input alphabets. From Theorem \ref{Th5}, we state that a
positive secrecy capacity for MIMOME channel with quantized CSI is always
achieved by using RVQ-based AN transmission scheme and Gaussian input
alphabets for large $B$ and $P_{\text{v}}$, if $N_{\text{E}}\leq N_{\text{A}%
}-N_{\text{B}}$.
\end{remark}

\begin{example}
Fig.~3 compares $\bar{C}_{\text{LB,Q}}$ and $\bar{C}_{\text{Bob}}$
as a
function of AN power $P_{\text{v}}$, with $N_{\text{A}}=4$, $N_{\text{B}}=N_{%
\text{E}}=2$, and $\alpha =\gamma =1$. Since $P_{\text{u}}=\alpha \gamma N_{%
\text{B}}$ and $P_{\text{v}}=\alpha \beta \gamma (N_{\text{A}}-N_{\text{B}})$%
, we have $P_{\text{u}}=2$ and $P_{\text{v}}=2\beta $. The simulation result
shows that $\bar{C}_{\text{LB,Q}}$ approaches to $\bar{C}_{\text{Bob}}$ as $%
P_{\text{v}}$ increases, for sufficiently large $B$.
\end{example}

\begin{figure}[tbp]
\centering\includegraphics[scale=0.55]{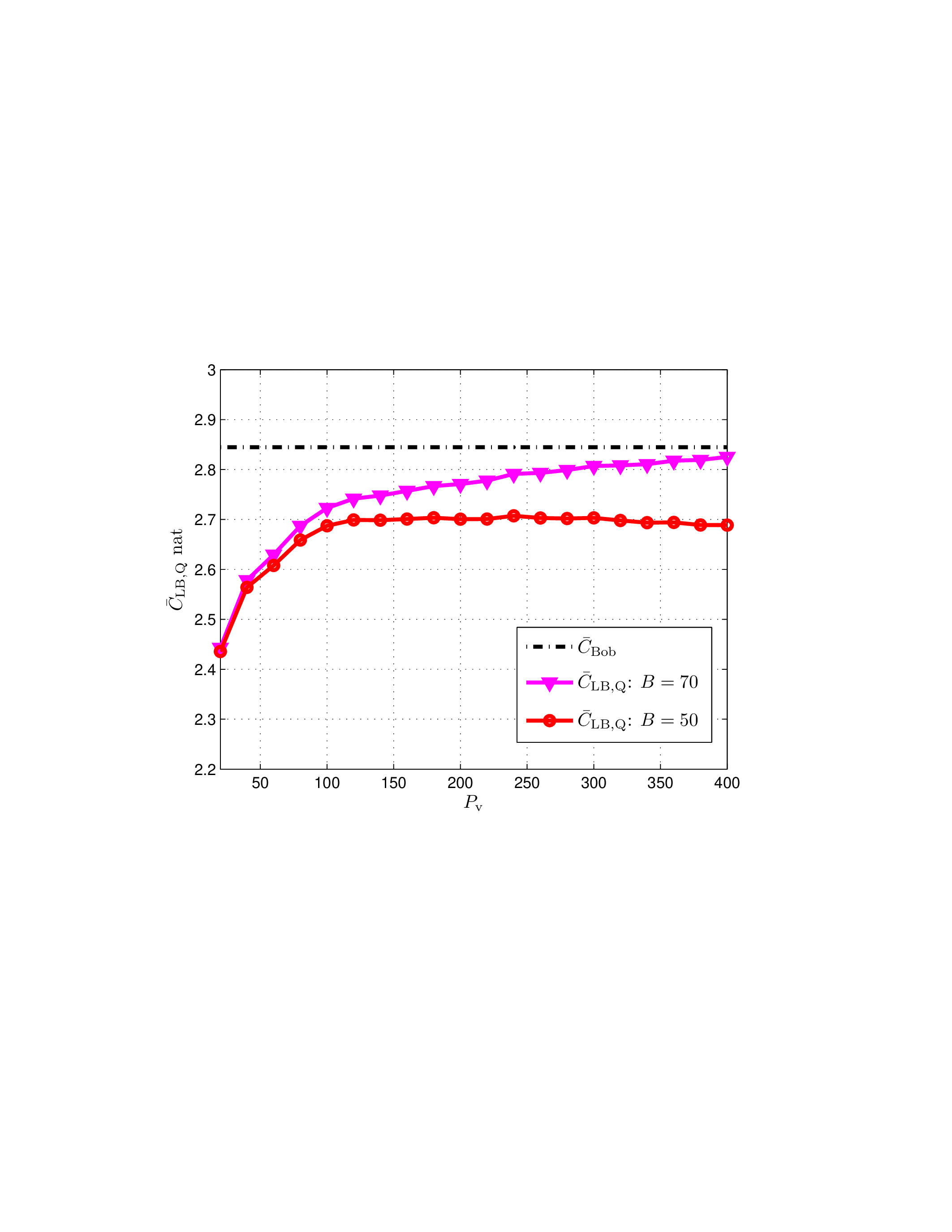} \vspace{-3mm}
\caption{$\bar{C}_{\text{LB,Q}}$ vs. $P_{\text{v}}$ with $N_{\text{A}}=4$, $%
N_{\text{B}}=N_{\text{E}}=2$, and $\protect\alpha =\protect\gamma =1$.}
\end{figure}

\section{Implementation Using a Deterministic Codebook}

In the previous section, random quantization codebooks have been used to
prove new results on secrecy capacity with quantized CSI. The methods of
constructing random unitary matrices $\mathbf{\tilde{V}}_{i}$ in (\ref{V})
can be found in \cite{Karol94}. In practice, it is often desirable that the
quantization codebook is deterministic. The problem of derandomizing RVQ
codebooks is typically referred to as \emph{Grassmannian subspace packing}
\cite{Love03,Mukkavilli03}. Despite of a few special cases (e.g., $B\leq 4$
\cite{Pitaval11}), analytical codebook design in general remains an
intricate task. In this section, we propose a very efficient quantization
codebook construction method for the case of $N_{\text{A}}=2$ and $N_{\text{B%
}}=1$.

According to \cite[Eq. (20)]{Pitaval11}, the codeword $\mathbf{\tilde{V}}%
_{i} $ can be expressed as%
\begin{equation}
\mathbf{\tilde{V}}_{i}(\omega ,\phi )=\left[
\begin{array}{c}
\cos \omega \\
e^{j\phi }\sin \omega%
\end{array}%
\right] \text{,}  \label{Ex}
\end{equation}%
which fully describes the complex Grassmannian manifold $G_{2,1}$ by setting
$0\leq \phi \leq 2\pi $ and $0\leq \omega \leq \pi /2$. Let $(\hat{\omega},%
\hat{\phi})$ be spherical coordinates parameterizing the unit sphere $S^{2}$%
, where $0\leq \hat{\phi}\leq 2\pi $ and $0\leq \hat{\omega}\leq \pi $. In
\cite[Lemma 1]{Pitaval11}, the authors further show that the map%
\begin{eqnarray}
S^{2} &\rightarrow &G_{2,1}  \notag \\
(\hat{\omega},\hat{\phi}) &\longmapsto &\mathbf{\tilde{V}}_{i}(\hat{\omega}%
/2,\hat{\phi})
\end{eqnarray}%
is an isomorphism. In other words, the sampling problem on $G_{2,1}$ can
analogically be addressed on the real sphere $S^{2}$.

The method of sampling points uniformly from $S^{2}$ is provided in \cite%
{Eric}. In details, one can parameterize $(x,y,z)\in S^{2}$ using spherical
coordinates $(\hat{\omega},\hat{\phi})$:%
\begin{eqnarray}
x &=&\sin \hat{\omega}\cos \hat{\phi}\text{,}  \notag \\
y &=&\sin \hat{\omega}\sin \hat{\phi}\text{,}  \notag \\
z &=&\cos \hat{\omega}\text{.}
\end{eqnarray}%
The area element of $S^{2}$ is given by%
\begin{equation}
\D S=\sin \hat{\omega}\D\hat{\omega}\D\hat{\phi}=-\D\left( \cos \hat{\omega}%
\right) \D\hat{\phi}\text{.}
\end{equation}%
Hence, to obtain a uniform distribution over $S^{2}$, one has to pick $\hat{%
\phi}\in \left[ 0,2\pi \right] $ and $t\in \left[ -1,1\right] $ uniformly
and compute $\hat{\omega}$ by:%
\begin{equation}
\hat{\omega}=\arccos t\text{.}
\end{equation}%
In this way $\cos \hat{\omega}=t$ will be uniformly distributed in $\left[
-1,1\right] $.

Based on above analysis, we give a straightforward method for codebook
construction:%
\begin{equation}
\mathcal{\hat{V}}=\Bigg\{\left. \mathbf{\hat{V}}_{1,i}=\left[
\begin{array}{c}
\cos (0.5\arccos t_{i}) \\
e^{j\phi _{i}}\sin (0.5\arccos t_{i})%
\end{array}%
\right] \right\vert i=1,\text{...},2^{B}\Bigg\}\text{,}  \label{D_Codebook}
\end{equation}%
where%
\begin{eqnarray}
t_{i} &=&-1+\dfrac{2\left\lceil i/2^{\left\lceil B/2\right\rceil
}\right\rceil -1}{2^{\left\lfloor B/2\right\rfloor }}\text{,} \\
\phi _{i} &=&\dfrac{2\pi \left( i\func{mod}2^{\left\lceil B/2\right\rceil
}\right) }{2^{\left\lceil B/2\right\rceil }}\text{.}
\end{eqnarray}%
Note that $\lfloor x\rfloor $ rounds to the closest integer smaller than or
equal to $x$, while $\left\lceil x\right\rceil $ to the closest integer
larger than or equal to $x$.

Using the deterministic codebook in (\ref{D_Codebook}) can save storage
space on Alice, since she can generate the target codeword $\mathbf{\hat{V}}%
_{1,j}$ directly without the knowledge of the whole codebook $\mathcal{\hat{V%
}}$. We remark that the proposed codebook construction is valid for any $B$.
This is different from the construction scheme in \cite[Sec. VI-A]{Pitaval11}%
, which is only possible for the case of $B\leq 4$.

\begin{example}
Fig.~4 examines the performance of the proposed codebook construction with $%
\beta =2$, $\gamma =1$, $P=10$, and $N_{\text{E}}=1$. When $B \leq 4 $, it is seen that
the performance of codebook $\mathcal{\hat{V}}$ in
(\ref{D_Codebook}) is indistinguishable from the optimal one in
\cite[Sec. VI-A]{Pitaval11}. When $B\geq8$, the proposed codebook
provides the same performance as the random one in (\ref{V}).
\end{example}

\begin{figure}[tbp]
\centering\includegraphics[scale=0.55]{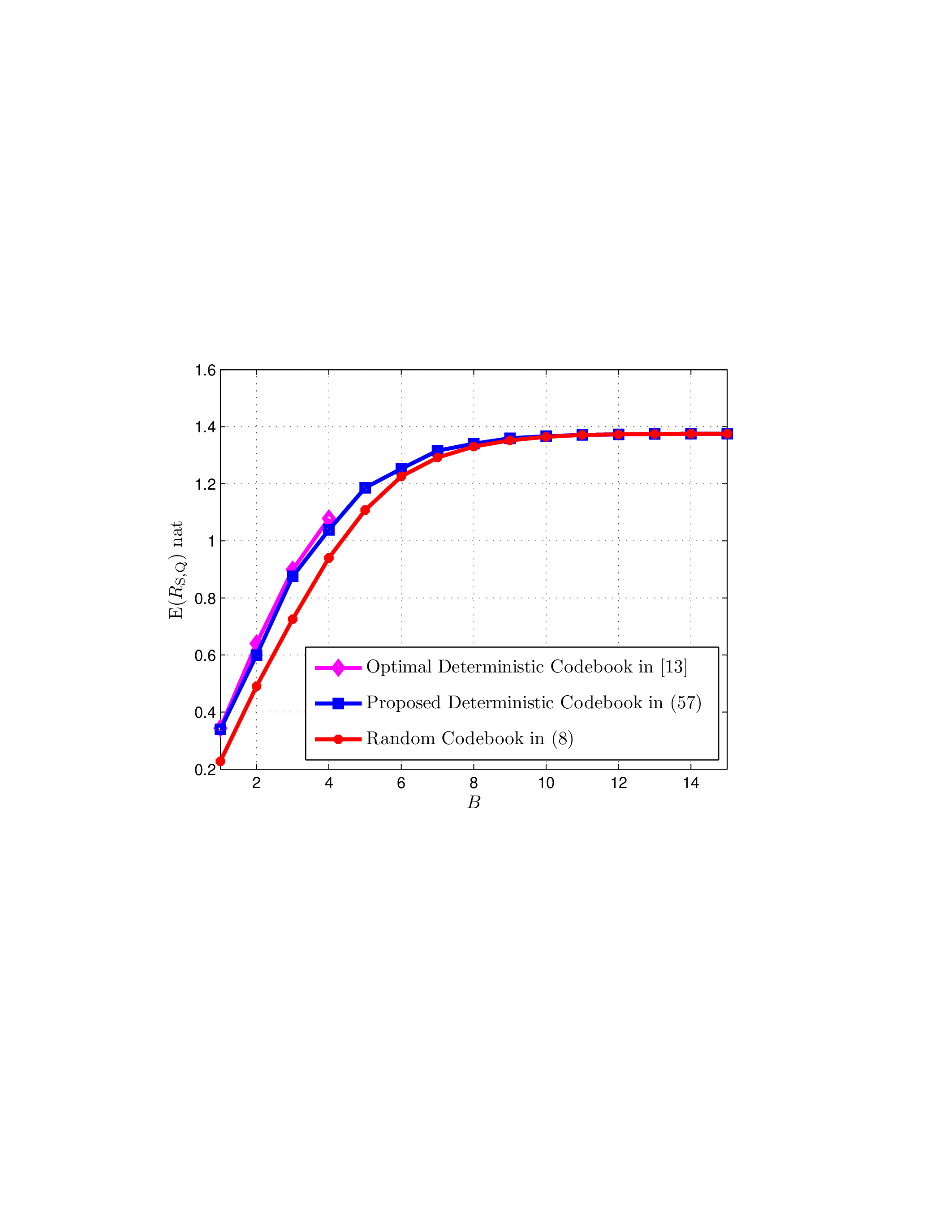}\vspace{-3mm}
\caption{E$(R_{\text{S,Q}})$ vs. $B$ with $\protect\beta =2$, $\protect%
\gamma =1$, $P=10$, $N_{\text{A}}=2$, and $N_{\text{B}}=N_{\text{E}}=1$.}
\end{figure}

\section{Conclusions}

In this work, we have discussed the problem of guaranteeing positive secrecy
capacity for MIMOME channel with the quantized CSI of Bob's channel and the
statistics of Eve's channel. We analyzed the RVQ-based AN scheme and
provided a lower bound on the ergodic secrecy capacity. We proved that a
positive secrecy capacity is always achievable by Gaussian input alphabets
when $N_{\text{E}}\leq N_{\text{A}}-N_{\text{B}}$, and the number of
feedback bits $B$ and the artificial noise power $P_{\text{v}}$ are large
enough. We also proposed an efficient implementation of discretizing the RVQ
codebook which exhibits similar performance to that of random codebook.

\section*{Appendix}

\subsection{Proof of Theorem 1}

According to \cite{Santipach09}, as $B\rightarrow \infty $, the RVQ
operation in (\ref{cost_fun}) can guarantee%
\begin{equation}
\mathbf{\tilde{V}}_{j}\rightarrow \mathbf{\tilde{V}}\text{.}  \label{1}
\end{equation}

We then check the matrix $\mathbf{\hat{Z}}$ generated by Alice. The SVD
decomposition of $\mathbf{H}$ can be written as
\begin{equation}
\mathbf{H=U}\mathbf{\Lambda }\mathbf{\tilde{V}}^{H}\text{.}  \label{2}
\end{equation}

From (\ref{1}) and (\ref{2}), as $B\rightarrow \infty $, we have%
\begin{equation}
\mathbf{H\hat{Z}}=\mathbf{U}\mathbf{\Lambda }\mathbf{\tilde{V}}^{H}\mathbf{%
\hat{Z}\rightarrow U}\mathbf{\Lambda }\mathbf{\tilde{V}}_{j}^{H}\mathbf{\hat{%
Z}=0}_{N_{\text{A}}\times (N_{\text{A}}-N_{\text{B}})}\text{,}
\end{equation}%
which means%
\begin{equation}
\mathbf{\hat{Z}}\rightarrow \mathrm{null}(\mathbf{H})\text{.}  \label{3}
\end{equation}

From (\ref{V_hat}), (\ref{1}) and (\ref{3}), we have $\mathbf{\hat{V}%
\rightarrow V}$ as $B\rightarrow \infty $. \QEDA

\subsection{Proof of Theorem 2}

Using \cite[Eq. 12, pp. 55]{Helmut96}, we have%
\begin{eqnarray}
&&\left\vert \mathbf{I}_{N_{\text{B}}}+\alpha \gamma (\mathbf{H\tilde{V}}%
_{j})(\mathbf{H\tilde{V}}_{j})^{H}+\alpha \beta \gamma (\mathbf{H\hat{Z})}(%
\mathbf{H\hat{Z})}^{H}\right\vert  \notag \\
&\geq &\left\vert \mathbf{I}_{N_{\text{B}}}+\theta _{\min }(\mathbf{H\tilde{V%
}}_{j})(\mathbf{H\tilde{V}}_{j})^{H}+\theta _{\min }(\mathbf{H\hat{Z})}(%
\mathbf{H\hat{Z})}^{H}\right\vert  \notag \\
&=&\left\vert \mathbf{I}_{N_{\text{B}}}+\theta _{\min }\mathbf{HH}%
^{H}\right\vert \text{,}  \label{Main2}
\end{eqnarray}%
where $\theta _{\min }=\min \left\{ \alpha \gamma ,\alpha \beta \gamma
\right\} $.

Since the unitary matrix $\mathbf{\hat{V}=[\tilde{V}}_{j},\mathbf{\hat{Z}]}$
is independent of $\mathbf{G}$ and its realization is known to Alice, $%
\mathbf{G\tilde{V}}_{j}\in \mathbb{C}^{N_{\text{E}}\times N_{\text{B}}}$ and
$\mathbf{G\hat{Z}}\in \mathbb{C}^{N_{\text{E}}\times \left( N_{\text{A}}-N_{%
\text{B}}\right) }$ are mutually independent complex Gaussian random
matrices with i.i.d. entries \cite[Th. 1]{Lukacs54}. We can write%
\begin{eqnarray}
\mathrm{E}\left( \log \dfrac{\left\vert \mathbf{I}_{N_{\text{E}}}+\alpha (%
\mathbf{G\tilde{V}}_{j})(\mathbf{G\tilde{V}}_{j})^{H}+\alpha \beta (\mathbf{G%
\hat{Z})}(\mathbf{G\hat{Z})}^{H}\right\vert }{\left\vert \mathbf{I}_{N_{%
\text{E}}}+\alpha \beta (\mathbf{G\hat{Z})}(\mathbf{G\hat{Z})}%
^{H}\right\vert }\right)  \label{B01}
\end{eqnarray}%
as the average of a function of $N_{\text{E}} \times N_{\text{A}}$ i.i.d
complex Gaussian random variables $\sim \mathcal{N}_{\mathbb{C}}(0,1)$.

Similarly, with unlimited feedback, we have%
\begin{eqnarray}
\mathrm{E}\left( \log \dfrac{\left\vert \mathbf{I}_{N_{\text{E}}}+\alpha (%
\mathbf{G\tilde{V}})(\mathbf{G\tilde{V}})^{H}+\alpha \beta (\mathbf{GZ)}(%
\mathbf{GZ)}^{H}\right\vert }{\left\vert \mathbf{I}_{N_{\text{E}}}+\alpha
\beta (\mathbf{GZ)}(\mathbf{GZ)}^{H}\right\vert }\right)  \label{B02}
\end{eqnarray}%
as the average of a function of $N_{\text{E}} \times N_{\text{A}}$ i.i.d
complex Gaussian random variables $\sim \mathcal{N}_{\mathbb{C}}(0,1)$.

From (\ref{B01}) and (\ref{B02}), we have%
\begin{eqnarray}
&&\mathrm{E}\left( \log \dfrac{\left\vert \mathbf{I}_{N_{\text{E}}}+\alpha (%
\mathbf{G\tilde{V}}_{j})(\mathbf{G\tilde{V}}_{j})^{H}+\alpha \beta (\mathbf{G%
\hat{Z})}(\mathbf{G\hat{Z})}^{H}\right\vert }{\left\vert \mathbf{I}_{N_{%
\text{E}}}+\alpha \beta (\mathbf{G\hat{Z})}(\mathbf{G\hat{Z})}%
^{H}\right\vert }\right)  \notag \\
&=&\mathrm{E}\left( \log \dfrac{\left\vert \mathbf{I}_{N_{\text{E}}}+\alpha (%
\mathbf{G\tilde{V}})(\mathbf{G\tilde{V}})^{H}+\alpha \beta (\mathbf{GZ)}(%
\mathbf{GZ)}^{H}\right\vert }{\left\vert \mathbf{I}_{N_{\text{E}}}+\alpha
\beta (\mathbf{GZ)}(\mathbf{GZ)}^{H}\right\vert }\right) \text{.}
\label{B03}
\end{eqnarray}

From (\ref{SR_P_CSI}), (\ref{SR_Q_CSI}), (\ref{Main2}) and (\ref{B03}), $%
\mathrm{E}(\Delta R_{\text{S}})$ can be upper bounded by%
\begin{align}
\mathrm{E}(\Delta R_{\text{S}})& \leq \mathrm{E}\left( \log \left\vert
\mathbf{I}_{N_{\text{B}}}+\alpha \gamma \mathbf{HH}^{H}\right\vert \right)
\notag \\
& -\mathrm{E}\left( \log \left\vert \mathbf{I}_{N_{\text{B}}}+\theta _{\min }%
\mathbf{HH}^{H}\right\vert \right)  \notag \\
& +\mathrm{E}\left( \log \left\vert \mathbf{I}_{N_{\text{B}}}+\alpha \beta
\gamma (\mathbf{H\hat{Z})}(\mathbf{H\hat{Z})}^{H}\right\vert \right) \text{.}
\label{M111}
\end{align}

We then estimate the third term in (\ref{M111}). Let $\lambda _{1}$, ... , $%
\lambda _{N_{\text{B}}}$ be the eigenvalues of $\mathbf{HH}^{H}$. We have%
\begin{equation}
\mathbf{H}^{H}\mathbf{H}=\mathbf{\tilde{V}}\mathbf{\Lambda \tilde{V}}^{H}%
\text{ and }\mathbf{\Lambda }=\mathrm{diag}\left( \left[ \lambda
_{1},...,\lambda _{N_{\text{B}}}\right] \right) \text{.}
\end{equation}%
Recalling the fact that for a Wishart matrix, its eigenvalues and
eigenvectors are independent. Therefore $\mathbf{\tilde{V}}$ and $\mathbf{%
\Lambda }$ are independent. This allows us to bound the third term in (\ref%
{M111}) by%
\begin{align}
& \mathrm{E}_{\mathbf{H}}\left( \log \left\vert \mathbf{I}_{N_{\text{B}%
}}+\alpha \beta \gamma (\mathbf{H\hat{Z})}(\mathbf{H\hat{Z})}^{H}\right\vert
\right)  \notag \\
& \overset{(a)}{=}\mathrm{E}_{\mathbf{H}}\left( \log \left\vert \mathbf{I}%
_{N_{\text{A}}-N_{\text{B}}}+\alpha \beta \gamma (\mathbf{H\hat{Z})}^{H}(%
\mathbf{H\hat{Z})}\right\vert \right)  \notag \\
& =\mathrm{E}_{\Lambda }\left( \mathrm{E}_{\mathbf{\tilde{V}}}\left( \log
\left\vert \mathbf{I}_{N_{\text{A}}-N_{\text{B}}}+\alpha \beta \gamma
\mathbf{\hat{Z}}^{H}\mathbf{\tilde{V}}\mathbf{\Lambda \tilde{V}}^{H}\mathbf{%
\hat{Z}}\right\vert \right) \right)  \notag \\
& \overset{(b)}{=}\mathrm{E}_{\Lambda }\left( \mathrm{E}_{\mathbf{\tilde{V}}%
}\left( \log \left\vert \mathbf{I}_{N_{\text{B}}}+\alpha \beta \gamma
\mathbf{\tilde{V}}^{H}\mathbf{\hat{Z}\hat{Z}}^{H}\mathbf{\tilde{V}}\mathbf{%
\Lambda }\right\vert \right) \right)  \notag \\
& \overset{(c)}{\leq }\mathrm{E}_{\Lambda }\left( \mathrm{E}_{\mathbf{\tilde{%
V}}}\left( \log \left\vert \mathbf{I}_{N_{\text{B}}}+\alpha \beta \gamma
\mathrm{E}_{\mathbf{\tilde{V}}}\left( \mathbf{\tilde{V}}^{H}\mathbf{\hat{Z}%
\hat{Z}}^{H}\mathbf{\tilde{V}}\right) \mathbf{\Lambda }\right\vert \right)
\right)  \notag \\
& \overset{(d)}{=}\mathrm{E}_{\Lambda }\left( \log \left\vert \mathbf{I}_{N_{%
\text{B}}}+\dfrac{\alpha \beta \gamma D\left( N_{\text{A}},N_{\text{B}%
},2^{B}\right) }{N_{\text{B}}}\mathbf{\Lambda }\right\vert \right)  \notag \\
& \overset{(e)}{\leq }\mathrm{E}_{\Lambda }\left( \log \left\vert \mathbf{I}%
_{N_{\text{B}}}+\dfrac{\alpha \beta \gamma \eta \left( N_{\text{A}},N_{\text{%
B}},2^{B}\right) }{N_{\text{B}}}\mathbf{\Lambda }\right\vert \right) \text{,}
\label{m23}
\end{align}%
where $(a)$ and $(b)$ hold because $|\mathbf{I+AB}|=|\mathbf{I+BA}|$, $(c)$
follows from the concavity of log determinant function, $(d)$ follows from
\cite[Lemma 1]{Jindal08}\cite[Lemma 2]{Shihc11}, $(e)$ holds because of (\ref%
{bbound}).

Applying the fact \cite[Th. 1]{shin03}%
\begin{equation}
\mathrm{E}\left( \log \left\vert \mathbf{I}_{N_{\text{B}}}+\rho \mathbf{HH}%
^{H}\right\vert \right) =\mathrm{E}\left( \log \left\vert \mathbf{I}_{N_{%
\text{B}}}+\rho \mathbf{\Lambda }\right\vert \right) =\Theta (N_{\text{B}%
},N_{\text{A}},\rho )\text{,}
\end{equation}%
to (\ref{M111}) and (\ref{m23}), we can simply obtain (\ref{MMO}). \QEDA

\subsection{Proof of Theorem 3}

Recalling the fact that $\mathbf{ZZ}^{H}=\mathbf{I}_{N_{\text{A}}}-\mathbf{%
\tilde{V}\tilde{V}}^{H}$ and $\mathbf{\hat{Z}\hat{Z}}^{H}=\mathbf{I}_{N_{%
\text{A}}}-\mathbf{\tilde{V}}_{j}\mathbf{\tilde{V}}_{j}^{H}$\textbf{. }From (%
\ref{SR_P_CSI}) and (\ref{SR_Q_CSI}), if $N_{\text{B}}=N_{\text{E}}=1$, we
can write $\Delta R_{\text{S}}$ as%
\begin{align}
\Delta R_{\text{S}}& =\log \left( 1+\alpha \gamma \mathbf{HH}^{H}\right)
\notag \\
& -\log \dfrac{1+\alpha \mathbf{GG}^{H}+\alpha (\beta -1)(\mathbf{GZ)}(%
\mathbf{GZ)}^{H}}{1+\alpha \beta (\mathbf{GZ)}(\mathbf{GZ)}^{H}}  \notag \\
& -\log \dfrac{1+\alpha \beta \gamma \mathbf{HH}^{H}+\alpha \gamma \left(
1-\beta \right) (\mathbf{H\tilde{V}}_{j})(\mathbf{H\tilde{V}}_{j})^{H}}{%
1+\alpha \beta \gamma \mathbf{HH}^{H}-\alpha \beta \gamma (\mathbf{H\tilde{V}%
}_{j})(\mathbf{H\tilde{V}}_{j})^{H}}  \notag \\
& +\log \dfrac{1+\alpha \mathbf{GG}^{H}+\alpha (\beta -1)(\mathbf{G\hat{Z})}(%
\mathbf{G\hat{Z})}^{H}}{1+\alpha \beta (\mathbf{G\hat{Z})}(\mathbf{G\hat{Z})}%
^{H}}\text{.}  \label{d1}
\end{align}

As $N_{\text{A}}$ and $B\rightarrow \infty $ with $B/N_{\text{A}}\rightarrow
\bar{B}$, according to \cite[Th. 1]{Santipach09}, we have%
\begin{equation}
\dfrac{(\mathbf{H\tilde{V}}_{j})(\mathbf{H\tilde{V}}_{j})^{H}}{\mathbf{HH}%
^{H}}\overset{a.s.}{\rightarrow }(1-2^{-\bar{B}}).  \label{5}
\end{equation}

Since $\beta $ (AN power allocation), $\gamma $ (Eve-to-Bob noise-power
ratio), and $P=\alpha \gamma +\alpha \beta \gamma (N_{\text{A}}-1)$ (average
transmit power constraint) are fixed, the central limit theorem tells us that%
\begin{align}
& \alpha \gamma \mathbf{HH}^{H}\overset{a.s.}{\rightarrow }P/\beta \text{, }%
\alpha \mathbf{GG}^{H}\overset{a.s.}{\rightarrow }P/\beta \gamma \text{,}
\notag \\
& \alpha (\mathbf{GZ)}(\mathbf{GZ)}^{H}\overset{a.s.}{\rightarrow }P/\beta
\gamma \text{, }\alpha (\mathbf{G\hat{Z})}(\mathbf{G\hat{Z})}^{H}\overset{%
a.s.}{\rightarrow }P/\beta \gamma \text{.}  \label{4}
\end{align}%
Note that $\mathbf{GZ}$ (or $\mathbf{G\hat{Z}}$) is a complex Gaussian
random vector with i.i.d. entries \cite[Th. 1]{Lukacs54}.

By substituting (\ref{5}) and (\ref{4}) into (\ref{d1}), we obtain (\ref{app}%
). \QEDA

\subsection{Proof of Theorem 4}

According to \cite[Th. 1]{shuiyin14_2}, we have%
\begin{equation}
\E(R_{\text{S}})=\Theta (N_{\text{B}},N_{\text{A}},\alpha \gamma )+\Theta
(N_{\min },N_{\max },\alpha \beta )-\Omega \text{,}  \label{L1}
\end{equation}%
where $\Theta (\cdot ,\cdot ,\cdot )$ is given in (\ref{Cap_CF}) and $\Omega
$ is given in (\ref{PPP}). By substituting (\ref{MMO}) and (\ref{L1}) into (%
\ref{Bound_M}), we can obtain (\ref{C_LB_F}). \QEDA

\subsection{Proof of Theorem 5}

If $\beta \geq 1$, then $\theta _{\min }=\alpha \gamma $. From (\ref{C_LB_F}%
) and (\ref{L1}), as $B\rightarrow \infty $,%
\begin{equation}
\bar{C}_{\text{LB,Q}}=\Theta (N_{\text{B}},N_{\text{A}},\alpha \gamma
)+\Theta (N_{\min },N_{\max },\alpha \beta )-\Omega =\E(R_{\text{S}})\text{.}
\label{5_1}
\end{equation}

According to \cite[Th. 3]{shuiyin14_2}, if $N_{\text{E}}\leq N_{\text{A}}-N_{%
\text{B}}$, as $\alpha \beta \rightarrow \infty $,%
\begin{equation}
\E(R_{\text{S}})=\E(C_{\text{S}})=\bar{C}_{\text{Bob}}\text{,}  \label{5_2}
\end{equation}%
where $\bar{C}_{\text{Bob}}$ represents Bob's average channel capacity.

Meanwhile, it always holds that%
\begin{equation}
\bar{C}_{\text{LB,Q}}\leq \E(C_{\text{S,Q}})\leq \bar{C}_{\text{Bob}}\text{.}
\label{5_3}
\end{equation}

From (\ref{5_1}), (\ref{5_2}) and (\ref{5_3}), if $N_{\text{E}}\leq N_{\text{%
A}}-N_{\text{B}}$ and $\beta \geq 1$, as $\alpha \beta $, $B\rightarrow
\infty $, we have%
\begin{equation}
\bar{C}_{\text{LB,Q}}=\E(C_{\text{S,Q}})=\E(C_{\text{S}})=\bar{C}_{\text{Bob}%
}\text{.}
\end{equation}%
\QEDA

\bibliographystyle{IEEEtran}
\bibliography{IEEEabrv,LIUBIB}

\end{document}